\newtheorem{thrm}{Theorem}
\newtheorem{claim}{Claim}
\newtheorem{lemm}{Lemma}
\newtheorem{defn}{Definition}
\newtheorem{corollary}{Corollary}
\def\qedbox#1#2{\vbox{\hrule height.2pt
  \hbox{\vrule width.2pt height#2pt \kern#1pt \vrule width.2pt}
  \hrule height.2pt}}
\def\qed{\hfill \quad\qedbox46\newline\smallbreak}
\def\s#1{\mbox{\boldmath $#1$}}
\def\floor#1{\lfloor #1 \rfloor}
\def\+{\!+\!}
\def\-{\!-\!}
\def\m{\!-\!}
\def\itbf#1{\textit{\textbf{#1}}}
\def\bproc{{\bf procedure\ }}
\def\bfunc{{\bf function\ }}
\def\bfor{{\bf for\ }}
\def\bto{{\bf to\ }}
\def\bdownto{{\bf downto\ }}
\def\band{{\bf and\ }}
\def\bdo{{\bf do\ }}
\def\bif{{\bf if\ }}
\def\band{{\bf and\ }}
\def\bthen{{\bf then\ }}
\def\belse{{\bf else\ }}
\def\belsif{{\bf else if\ }}
\def\breturn{{\bf return\ }}
\def\bstep{{\bf step\ }}
\def\la{\leftarrow}
\def\qq{\qquad}
\def\com#1{{\bf $\triangleright$}\hspace{6pt}{\sl #1}}
\def\pref(#1,#2){$#1$ is a prefix of $#2$}
\def\suff(#1,#2){$#1$ is a suffix of $#2$}
\def\reg(#1,#2){$#2$ is $#1$-regular}
\def\notreg(#1,#2){$#2$ is not $#1$-regular}
\def\true{\tt{true}}
\def\false{\tt{false}}
\def\UPDATE\_F{\tt{UPDATE\_F}}
\def\MP{\mbox{MP}}
\def\FP{\mbox{FP}}
\def\FS{\mbox{FS}}
\def\O{\mathcal{O}}
\def\Q'{\tt{Q'}}
\newif\ifShow
\newtheorem{conj}{Conjecture}
\algnewcommand{\LineComment}[1]{\State \(\triangleright\) \normalfont{\sl #1}}
\newcommand{\xRpos}{P_{\s{x_R}}}
\begin{document}

\begin{frontmatter}

\title{A New Approach to Regular \& Indeterminate Strings}

\author[ufu]{Felipe A. Louza}
\ead{louza@ufu.br}
\address[ufu]{Faculty of Electrical Engineering, Federal University of Uberl\^andia, Brazil}
\author[macmaster]{Neerja Mhaskar\corref{mycorrespondingauthors}}
\cortext[mycorrespondingauthors]{Corresponding author}
\ead{pophlin@mcmaster.ca}
\address[macmaster]{Department of Computing and Software, McMaster University, Canada}

\author[macmaster,kcl,murdoch]{W.~F.~Smyth}
\ead{smyth@mcmaster.ca}
\address[kcl]{Department of Informatics, King's College London, UK}
\address[murdoch]{School of Engineering \& Information Technology, Murdoch University, Australia}

\begin{abstract}
In this paper we propose a new, more appropriate
definition of \itbf{regular} and \itbf{indeterminate} strings. A \itbf{regular} string is 
one that is ``isomorphic'' to a string whose entries
all consist of a single letter, but which nevertheless may itself
include entries containing multiple letters.
A string that is not regular is said to be \itbf{indeterminate}.
We begin by proposing a new model for the representation of strings, regular or indeterminate, then go on to
describe a linear time algorithm to determine whether or not a string $\s{x} = \s{x}[1..n]$ is regular and, if so,
to replace it by a lexicographically least (lex-least) string \s{y} whose entries are all single letters.  Furthermore, we connect the regularity of a string to the transitive closure problem on a graph, which in our special case can be efficiently solved.
We then introduce the idea of a feasible palindrome array $\MP$ of a string,
and prove that every feasible $\MP$ corresponds to
some (regular or indeterminate) string.
We describe an algorithm that constructs a string \s{x} corresponding to given feasible $\MP$,
while ensuring that whenever possible \s{x} is regular and if so, then lex-least.
A final section outlines new research directions suggested by
this changed perspective on regular and indeterminate strings.
\end{abstract}

\begin{keyword}
indeterminate string \sep degenerate string \sep generalized string \sep transitive closure \sep palindrome \sep maximal palindrome array \sep Manacher's condition \sep reverse engineering \sep algorithm \sep stringology
\vspace{-3mm}
\end{keyword}

\end{frontmatter}


\section{Introduction}
\label{sect-intro}
Under various names and guises, the idea of a string
as something other than a sequence of single letters has been discussed
for almost half a century.
In 1974 Fischer \& Paterson \cite{FP74} studied pattern-matching
on strings \s{x} whose entries could be \itbf{don't-care} letters; that is, 
letters matching any single letter in the alphabet $\Sigma$
on which the string is defined,
hence matching every position in \s{x}.
(Thus the don't care letter $*$ would match every letter
in a given alphabet $\Sigma = \{a,b,c\}$.)
In 1987 Abrahamson \cite{A87} extended this model by considering
pattern-matching on \itbf{generalized} strings whose entries could be
non-empty subsets of $\Sigma$.
(Thus entries $\{b\}$, $\{a,b\}$, $\{b,c\}$ would match each other 
due to the shared $b$.)

Both of these models have been intensively studied in this century, with interest fueled by applications, especially to computational biology, where as a byproduct of the DNA sequencing process and in other contexts, indeterminate letters such as $\{a,t\}$, $\{c,g\}$ or $\{a,c,g,t\}$ may arise.
The properties of strings with don't care letters,
also called \itbf{holes}, have been intensively studied by Blanchet-Sadri,
culminating in a 2008 book \cite{BS08} on \itbf{strings with holes},
also called \itbf{partial words},
that summarized the contributions of her many previous research articles.
In 2009 a periodicity lemma for partial words was formulated in \cite{SW09a}.

Meanwhile Iliopoulos and various collaborators
(see, for example, \cite{IMR08,IRVV10,IKP17})
studied pattern-matching
and other problems, especially related to bioinformatics applications,
on generalized strings (now renamed \itbf{degenerate strings}, or even elastic degenerate strings).
Another sequence of papers \cite{HS03,HSW05,HSW06,HSW08,SW09}
studied other pattern-matching algorithms on generalized strings
(renamed \itbf{indeterminate strings}, with \itbf{indeterminate letters} such as $\{a,b\}$ or $\{b,c\}$ in some positions).
In addition, other aspects of indeterminate strings have been studied,
such as their border arrays \cite{NRR12}, cover arrays \cite{BRS09},
prefix arrays \cite{ARS15,CRSW15} and suffix arrays \cite{BIST03,NRR12}.
The computation of an indeterminate string from its prefix graph
was recently considered in \cite{HRSS18}.

The traditional approach to defining an indeterminate string only relies on the scope of its letters irrespective of the matching relationship between them. Therefore, a string $\s{x}=a\{a,c\} b \{a, d\} bb$ can be seen as an indeterminate string, when in essence it can be reduced to an ``isomorphic'' string $\s{x'}=aababb$ that is regular (see Section~\ref{sect-indet} for definitions).
It is evident that algorithms designed for indeterminate strings are slower and at times require more space for processing. Therefore, it is important to identify strings that are ``truly'' indeterminate. In this paper we redefine an indeterminate string in a context
that we believe captures the idea in a more appropriate way --- at once more general and more precise.

The outline of the paper is as follows: in Section~\ref{sect-indet} we present the necessary definitions, including in particular a model for
the representation of strings that efficiently incorporates both regular and indeterminate.
In Section \ref{sect-alg}, we apply this model to describe
a linear time algorithm that determines whether or not a given
string is regular and, if so, returns an isomorphic
lexicographically least (lex-least) regular string, whose entries are all single letters.  We go on to present an interesting connection between strings defined using our approach and the transitive closure of a relation, showing that for some special cases of a relation, its transitive closure can be computed in quadratic time.
In Section~\ref{sect-maxpal} we introduce a ``Reverse Engineering'' problem \cite{FLRSSY99,FGLRSSY02,CCR09,DFHSS18}
for both regular and indeterminate strings (in our new formulation).
We first show that every feasible palindrome array MP does indeed
correspond to some string, regular or indeterminate.
Then, following \cite{IIBT10}, we describe an algorithm that,
given a feasible palindrome array MP,
computes a lex-least regular string corresponding to
MP, whenever that is possible; and, if not,
returns a corresponding indeterminate string.
Section~\ref{sect-probs} discusses open problems and questions arising
from this new approach.

\section{Regular \& Indeterminate Strings}
\label{sect-indet}
A \itbf{letter} $\ell$ is a finite list of $s$ distinct \itbf{characters} $c_1,c_2,\ldots,c_s$, each drawn from a set $\Sigma$
of size $\sigma = |\Sigma|$
called the \itbf{alphabet}.
In the case that $\Sigma$ is ordered,
$\ell$ is said to be in \itbf{normal form} if its characters occur in the ascending order determined by $\Sigma$.
The integer $s = s(\ell)$ is called the \itbf{scope} of $\ell$.
If $s = 1$, $\ell$ is said to be \itbf{regular}, otherwise \itbf{indeterminate}.
Two letters $\ell_1,\ell_2$ are said to \itbf{match},
written $\ell_1 \approx \ell_2$, if and only if
$\ell_1 \cap \ell_2 \ne \emptyset$.
In the case that matching $\ell_1$ and $\ell_2$ are both regular,
we may write $\ell_1 = \ell_2$.
  
For $n \ge 1$, a \itbf{string} $\s{x} = \s{x}[1..n]$ is a sequence 
$\s{x}[1],\s{x}[2],\ldots,\s{x}[n]$ of letters,
where $n = |\s{x}|$ is the \itbf{length} of \s{x},
and every $i \in \{1,\ldots,n\}$ is a \itbf{position} in \s{x}.
If every letter in \s{x} is in normal form, then \s{x} itself
is said to be in \itbf{normal form}.
A tuple $T = (i,j_1,j_2)$ of distinct positions $i,j_1,j_2$ in \s{x}
such that $$\s{x}[j_1] \approx \s{x}[i] \approx \s{x}[j_2]$$
is said to be a \itbf{triple}.
A triple $T$ is \itbf{transitive} if $\s{x}[j_1] \approx \s{x}[j_2]$,
otherwise \itbf{intransitive}.
If every triple $T$ in \s{x} is transitive,
then we say that \s{x} is \itbf{regular}; otherwise, \s{x} is \itbf{indeterminate}.
The \itbf{scope} of \s{x} is given by
$$S(\s{x}) = \max_{i\in \{1,\dots ,n\}} s(\s{x}[i]).$$
 
Two strings \s{x} and \s{y} of equal length $n$
are said to be \itbf{isomorphic} if and only if
for every $i,j \in$ $\{1,\dots,n\}$,
\begin{equation}
\label{rule2}
\s{x}[i] \approx \s{x}[j] \Longleftrightarrow \s{y}[i] \approx \s{y}[j].
\end{equation}
Note that isomorphic strings may 
superficially have quite
different representations; for example,
$\s{y} = \{a,b\}\{b,c\}\{a,c\}$ is isomorphic to $\s{x} = aaa$.
Thus we see that the definition given here of regularity of a string \s{x}
is more general, including many more strings,
than the usual definition (scope $= 1$) given in the literature.
However:
\begin{lemm}
\label{lemmreg}
Every regular string is isomorphic to a string of scope 1.
\end{lemm}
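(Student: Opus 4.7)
The plan is to show that on a regular string $\s{x}$, the position-level match relation defined by $i \sim j \iff \s{x}[i] \approx \s{x}[j]$ is an equivalence relation on $\{1,\dots,n\}$, then build $\s{y}$ by assigning a distinct single character to each equivalence class.

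First I would verify the three defining properties of an equivalence relation. Reflexivity ($\s{x}[i] \approx \s{x}[i]$) follows because each letter is a nonempty set, so $\s{x}[i] \cap \s{x}[i] \ne \emptyset$. Symmetry is immediate from the symmetry of set intersection in the definition of $\approx$. The nontrivial step is transitivity: given positions $i,j,k$ with $\s{x}[i] \approx \s{x}[j]$ and $\s{x}[j] \approx \s{x}[k]$, I must conclude $\s{x}[i] \approx \s{x}[k]$. If $i,j,k$ are pairwise distinct, this is exactly the transitivity of the triple $(j,i,k)$, which is guaranteed by the hypothesis that $\s{x}$ is regular; if any two of the three indices coincide, transitivity collapses to one of the given hypotheses (using reflexivity/symmetry). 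So $\sim$ is indeed an equivalence relation.

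Next I would partition the position set $\{1,\dots,n\}$ into the resulting equivalence classes $C_1,\dots,C_k$, take any alphabet $\Sigma'$ with $|\Sigma'| \ge k$, pick distinct characters $c_1,\dots,c_k \in \Sigma'$, and define $\s{y}[i] = c_m$ whenever $i \in C_m$. By construction every entry of $\s{y}$ is a single character, so $S(\s{y}) = 1$ and $\s{y}$ has scope~1.

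Finally I would check the isomorphism condition (\ref{rule2}). If $\s{x}[i] \approx \s{x}[j]$ then $i$ and $j$ lie in the same class $C_m$, so $\s{y}[i] = \s{y}[j] = c_m$, which forces $\s{y}[i] \approx \s{y}[j]$. Conversely, since the $c_m$ were chosen distinct, $\s{y}[i] \approx \s{y}[j]$ can only hold when $\s{y}[i] = \s{y}[j]$, i.e.\ $i$ and $j$ are in the same class, which means $i \sim j$ and hence $\s{x}[i] \approx \s{x}[j]$. The only mildly subtle step is the transitivity argument above; once that is in hand, the construction of $\s{y}$ and the verification of (\ref{rule2}) are routine.
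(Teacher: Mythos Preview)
Your proof is correct and follows essentially the same approach as the paper's. The paper phrases the argument in graph-theoretic terms---observing (with a citation) that for a regular string the match relation on positions yields a graph that is a disjoint union of cliques, then assigning one regular letter per clique---whereas you phrase the identical idea directly as an equivalence relation and its partition into classes; your version is more self-contained in that you explicitly verify transitivity from the definition of regularity rather than invoking an external reference.
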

\begin{proof} 
In \cite{CRSW15,HRSS18} and elsewhere it is observed that a regular string
$\s{x}[1..n]$ is equivalent to a graph $\mathcal{G}$ on vertices $\{1,\dots,n\}$ which is a
collection of cliques with the property that vertices
$i$ and $j \ne i$ are in the same clique if and only if 
$\s{x}[i] \approx \s{x}[j]$.
Now if, for every clique $\mathcal{C}$ of $\mathcal{G}$,
we choose a unique regular letter $\ell_{\mathcal{C}}$ and,
corresponding to every position $i$ in $\mathcal{C}$,
make the assignment $\s{x}[i] \la \ell_{\mathcal{C}}$,
we thus construct a string on regular letters
equivalent to $\mathcal{G}$; that is,
a string of scope 1.  \qed
\end{proof}
We remark that constructing the graph $\mathcal{G}$ described in Lemma~\ref{lemmreg}, then checking whether or not it is a collection of cliques, is a trivial and obvious approach to determining the regularity of \s{x}.  In this paper we propose more efficient and interesting algorithms.

The following result is an easy corollary of Lemma~\ref{lemmreg}:
\begin{corollary}
\label{lemmmap}
Given a regular string $\s{x}[1..n]$, then, corresponding to every triple $(i, j_1, j_2)$, we can assign a regular letter to $\s{y}[i], \s{y}[j_1], \s{y}[j_2]$ in such a way that the resulting string $\s{y}[1..n]$ is isomorphic to $\s{x}[1..n]$. \qed
\end{corollary}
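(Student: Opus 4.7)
The plan is to derive the corollary as an immediate consequence of Lemma~\ref{lemmreg}. I would first invoke Lemma~\ref{lemmreg} to obtain a scope-1 string $\s{y}[1..n]$ isomorphic to $\s{x}[1..n]$; every entry of this $\s{y}$ is, by construction, a regular letter. I claim that this $\s{y}$ already supplies, uniformly for every triple simultaneously, the regular-letter assignment demanded by the corollary.

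To verify the claim for an arbitrary triple $(i,j_1,j_2)$, I would argue as follows. By definition of a triple in \s{x}, we have $\s{x}[j_1]\approx \s{x}[i]\approx \s{x}[j_2]$, and by regularity of \s{x} the triple is transitive, so also $\s{x}[j_1]\approx \s{x}[j_2]$. Applying the isomorphism condition~(\ref{rule2}) to each of these three matching relations transports them to \s{y}, yielding $\s{y}[j_1]\approx \s{y}[i]\approx \s{y}[j_2]$ together with $\s{y}[j_1]\approx \s{y}[j_2]$. Since $\s{y}[i],\s{y}[j_1],\s{y}[j_2]$ are all regular (scope $1$), pairwise matching collapses to equality, so $\s{y}[i]=\s{y}[j_1]=\s{y}[j_2]$. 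This is the required assignment of a single regular letter to the three positions of the triple.

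The global isomorphism of \s{y} to \s{x} — not merely consistency on one chosen triple — is supplied directly by Lemma~\ref{lemmreg}, so nothing further is needed. Since all real content is absorbed by that lemma, there is no genuine obstacle; the only thing to check is the minor observation that, among regular letters, the matching relation coincides with equality, so the per-triple statement of the corollary is realized by one and the same globally defined string \s{y}.
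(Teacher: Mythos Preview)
Your proposal is correct and follows exactly the route the paper intends: the corollary is stated with a \qed and no separate proof, being declared an ``easy corollary of Lemma~\ref{lemmreg}'', and your argument simply spells out that easy step by taking the scope-1 string \s{y} from Lemma~\ref{lemmreg} and observing that isomorphism plus transitivity forces $\s{y}[i]=\s{y}[j_1]=\s{y}[j_2]$ on every triple.
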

For example, $\s{y} = ccdcdd$ is isomorphic to 
$\s{x} = a\{a,c\} b \{a, d\} bb$.

\subsection{A Model for the Representation of Indeterminate Strings}
The preceding result suggests a need for an algorithm to determine whether or not an apparently indeterminate string is nevertheless actually regular, even though,
like the string $\s{x} = a\{a,c\} b \{a, d\} bb$ mentioned in the Introduction, it contains indeterminate letters.
We describe such an algorithm in the next section,
but first we must deal with the following issue:
in order to design algorithms on indeterminate strings \s{x}, we need to have some practical representation of such strings --- in particular, of indeterminate letters.
This requires us to deal in some realistic way with the fact that, on an alphabet of size $\sigma$, there are $2^{\sigma}\- 1$ possible nonempty distinct letters.
For $\sigma = 2$, we have only $a, b, \{a,b\}$, but for $\sigma = 26$,
dealing efficiently with $2^{26} - 1$ distinct letter sets
is beyond current processing capabilities.
Fortunately, in practice, such cases do not arise:
there may be ``many'' letter sets but, for large $\sigma$, not $2^\sigma\- 1$ of them.

Thus we need a model of computation that deals efficiently with indeterminate letters even when their number is large --- though not ``too large''.
For given $\sigma$,
we shall accordingly assume an integer alphabet with letters $\ell \in \{1,2,\ldots,\sigma\}$ and we shall interpret letter
$\ell = 0$ as denoting a hole or don't-care.
Otherwise, $\ell > \sigma$ represents an indeterminate letter, that also determines the position $\sigma' = \ell\- \sigma$ in an array $I[1..\sigma^*]$, where $\sigma^*$ is the maximum number of indeterminate letters (in addition to the don't care)
allowed to occur in \s{x}.  Each entry $I[\sigma']$ is a pair $(s(\ell),loc)$ that gives the scope $s(\ell)$ of the indeterminate letter $\ell$ and the starting location $loc$ in an array $L$, consisting of integers in the range $\{1\dots\sigma\}$, of the subarray specifying the $s(\ell)$ regular letters that constitute $\s{x}[i]$:
$$(\ell_1, \ell_2, \ldots, \ell_{s(\ell)}).$$
Here the $\ell_j, 1 \le j \le s(\ell)$, are distinct elements of $\Sigma$ in normal form, thus satisfying
$\ell_1 < \ell_2 < \cdots < \ell_{s(\ell)}$. For instance, consider $\s{x}=aac\{a,c\}gta\{g,t\}\{a,c\}\{g,t\}$ over $\Sigma=\{a,c,g,t\}$. To represent it using the above encoding, we map $a \rightarrow 1, c \rightarrow 2, g \rightarrow 3, t \rightarrow 4$. Since we have two indeterminate letters in $\s{x}$, $\sigma^*=2$, $0\leq \ell \leq 6$, and $\{a,c\} \rightarrow 5$ and $\{g, t\} \rightarrow 6$. Further, $I=[(2,1), (2,3)]$, $L=[1,2,3,4]$, and the encoding of $\s{x}$ is $1125341656$.

To fix the ideas, we suppose here that $\ell$, hence every entry in array $L$, is constrained to one byte (8 bits) of storage.  Thus, for example, to handle English text, we should suppose that $\sigma = 77$ (26 lower case letters, 26 upper case,
10 numerical digits, 14 punctuation symbols, and space), leaving available values $\ell \in 78..255$: $\sigma^* = 178$ distinct indeterminate letters.
(Doubling storage for $\ell$ to two bytes permits 65,458 indeterminate English letters to be defined.)

Another example: for DNA strings on $\Sigma = \{a,c,g,t\}$, $\sigma = 4$, we could define
$a \rightarrow 1, c \rightarrow 2, g \rightarrow 3, 
t \rightarrow 4$, use zero for the don't care and 5--14
for the other 10 possible combinations of letters:
$$ac,ag,at,cg,ct,gt,acg,act,agt,cgt.$$
Then for DNA a half byte (4 bits) for $\ell$ would be sufficient to identify all the indeterminate letters, and the associated array $L$ would require at most 24 half bytes to represent them.

Other models to represent indeterminate strings have been proposed. For example, in~\cite{prochazka19} all the non-empty letters (both regular and indeterminate) over the DNA alphabet $\Sigma_{DNA}=\{A,C,G,T\}$ are mapped to the IUPAC symbols $\Sigma_{IUPAC}=\{A,C,G,T,R,Y,S,W,K,M,B,D,H,V,N\}$, to construct an isomorphic regular string of scope $1$ over $\Sigma_{IUPAC}$. Further, in~\cite{btt2013} each symbol in the DNA alphabet is represented as a 4-bit integer power of $2$ ($2^i$, with $i \in \{0,1,2,3\}$). A non-empty indeterminate letter over $\Sigma_{DNA}$ is represented as $\Sigma_{\{s \in \mathcal{P}(\Sigma)\}}s$. Then instead of using the natural order on integers it uses a Gray code~\cite{gray53} (also known as the reflected binary code) to order indeterminate letters over $\Sigma_{DNA}$. Note that with the Gray code two successive values differ only by one bit, such as 1100 and 1101, which enables minimizing the number of separate intervals associated with each of the four symbols of $\Sigma_{DNA}$.
The models described here are of interest primarily for representing indeterminate strings over the DNA alphabet. However, the same approach is applicable to any alphabet.

\section{Determining the regularity of \s{x}}
\label{sect-alg}
In this section we discuss a function REGULAR
(Figure~\ref{fig-regular})
that determines whether or not a given string \s{x} is regular.
REGULAR deals with strings \s{x}
of scope $S(\s{x}) > 1$,
since otherwise, for $S(\s{x}) = 1$,
the regularity of \s{x} is immediate.
As a byproduct, in the case that \s{x} is regular, 
REGULAR computes the lex-least
string \s{y} of scope 1 on an integer alphabet that is isomorphic to \s{x}.

Before getting into the details of REGULAR, we need to consider the overall strategy.
An issue that arises is this: in order to determine whether or not $\s{x}[1..n]$ is regular, it is necessary to look at all triples to determine whether or not one of them is intransitive, a process that it seems must require $n\choose 2$ letter comparisons, thus $\O(S(\s{x})n^2)$ time.
However, it suffices to apply REGULAR to a \itbf{reduced string} $\s{x_R} = \s{x_R}[1..m]$ consisting of exactly one occurrence of each of the $m$ distinct letters (both regular and indeterminate) in \s{x}.
Thus, in view of our constraint on $\ell$, $m \le 256$.
Clearly \s{x_R} is regular if and only if \s{x} is regular.
Consequently triple-testing time reduces to $\O(m^2)$, where of course $m << n$.  (Indeed, even for the extreme two-byte case mentioned above, with $m = 65,536$, $m^2$ is only 4 billion or so, thus still $\O(n)$ in many cases.)

\begin{figure}[ht!]
{\leftskip=0.5in\obeylines\sfcode`;=3000
\bfunc $REGULAR(\s{x},n;\s{y},\sigma')$ : Boolean
construct $\s{x_R}$ \com{contains each distinct letter of $\s{x}$ exactly once}
$m \la |\s{x_R}|$
$ret \la regular_{min}(\s{x_R}, m, \s{y_R}, \sigma')$
\bif $ret$ \bthen 
\qq construct $\s{y}$ from $\s{y_R}$
$REGULAR \la ret$;\ \breturn{}

}
\caption{This Boolean function determines whether ($REGULAR = \true$) or not ($REGULAR = \false$) a string $\s{x}[1..n]$ of scope $S(\s{x}) > 1$ on alphabet $\Sigma$ is regular.  In the former case, on exit the string \s{y} is the lex-least regular string of scope 1 on an integer alphabet $\Sigma' = \{1,2,\ldots,\sigma'\}$ that is isomorphic to \s{x}. 
}
\label{fig-regular}
\end{figure}

The first step of REGULAR therefore computes \s{x_R}, a process performed in $\Theta(n)$ time by a single scan of \s{x}, using a Boolean array $found[1..256]$ to indicate whether ($found[j] = \true$) or not ($found[j] = \false$) the letter $j = \s{x}[i]$ has occurred previously in $\s{x}[1..i-1]$.
Note that in this process, the letters are placed in \s{x_R} in the order of their first occurrence in \s{x}.
Thus:
\begin{claim}
\label{claim-basic}
In cases of practical interest --- that is, when $n$ greatly exceeds $m$ --- the replacement of $\s{x} = \s{x}[1..n]$ by the reduced string $\s{x_R} = \s{x_R}[1..m]$, where $m$ is the number of distinct letters in \s{x}, permits the regularity of \s{x} to be determined in $\O(n)$ time. \qed
\end{claim}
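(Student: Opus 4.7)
The plan is to decompose the claim into three subclaims: (a) constructing the reduced string $\s{x_R}$ from \s{x} takes $\O(n)$ time, (b) \s{x_R} is regular if and only if \s{x} is regular, and (c) testing the regularity of \s{x_R} can be performed in time depending only on $m$ (at most $\O(m^2)$ using the graph model of Lemma~\ref{lemmreg}). Combining (a)--(c) yields total time $\O(n + m^2)$, which collapses to $\O(n)$ precisely under the practical assumption $m^2 = \O(n)$; because our representation constrains $\ell$ to one byte, $m \le 256$ in fact makes $m^2$ a constant, so the bound becomes $\O(n)$ unconditionally in the declared setting.

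For (a), I would justify the single-scan construction already described in the text: maintain a Boolean table $found[1..256]$ indexed by the integer encoding of each letter (regular, indeterminate, or don't-care), initialized to $\false$; then, scanning $\s{x}[1..n]$ once, append $\s{x}[i]$ to \s{x_R} and set $found[\s{x}[i]] \la \true$ whenever $found[\s{x}[i]] = \false$. Each step is $\O(1)$ under the fixed-width representation of $\ell$, so the scan takes $\Theta(n)$ time and produces a string of length $m \le 256$ containing each distinct letter of \s{x} exactly once.

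The conceptual heart of the proof is (b). I would argue both directions by relating triples in \s{x} to triples in \s{x_R}. If \s{x_R} is not regular, it contains an intransitive triple $(i,j_1,j_2)$; the three letters $\s{x_R}[i], \s{x_R}[j_1], \s{x_R}[j_2]$ are distinct and each appears in \s{x} (at some position), and those three positions in \s{x} form an intransitive triple as well, so \s{x} is not regular. Conversely, suppose \s{x} contains an intransitive triple $T = (i,j_1,j_2)$. Since $\approx$ is reflexive on letters, any two positions carrying the same letter match each other, so $T$ cannot involve a repeated letter without becoming transitive; hence $\s{x}[i], \s{x}[j_1], \s{x}[j_2]$ are three distinct letters, and they occur at three positions in \s{x_R} that form an intransitive triple there. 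Thus regularity is preserved in both directions.

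For (c), once we have \s{x_R} of length $m$, we test all $\binom{m}{2}$ letter-pairs for matching (each comparison is $\O(1)$ in our representation, or at worst $\O(\sigma)$ but absorbed into the constant) and verify that the resulting match relation on $\{1,\dots,m\}$ partitions the positions into cliques, as in the proof of Lemma~\ref{lemmreg}; this is an $\O(m^2)$ procedure. The main obstacle I anticipate is the rigor of (b), since the equivalence looks deceptively obvious; the point that needs care is that triples with a repeated letter are automatically transitive by reflexivity of $\approx$, which is exactly what allows us to restrict attention to distinct letters and hence to \s{x_R}.
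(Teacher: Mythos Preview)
Your proposal is correct and follows essentially the same approach as the paper: the paper's justification for the claim is the preceding paragraph, which asserts (a) the $\Theta(n)$ single-scan construction via the $found[1..256]$ array, (b) that ``clearly $\s{x_R}$ is regular if and only if \s{x} is regular'', and (c) that triple-testing on $\s{x_R}$ reduces to $\O(m^2)$ with $m \le 256$. Your write-up differs only in that you actually prove (b) rather than declaring it clear; your observation that an intransitive triple must involve three distinct letters (since repeated letters force transitivity by reflexivity of $\approx$) is exactly the right justification and is left implicit in the paper.
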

Indeed, in many cases that arise in practice, the status of \s{x} follows from conditions that are more easily evaluated.  For instance, \s{x} is indeterminate if
\begin{itemize}
    \item it contains all regular letters in $\Sigma$ and at least one indeterminate letter; or
    \item it contains a don't-care and at least two regular letters; or
    \item it contains an indeterminate letter $\ell$ as well as any two characters of $\ell$.
\end{itemize}

Having constructed $\s{x_R}$, REGULAR executes
the Boolean function $regular_{min}$ (Figure~\ref{fig-regularmin}), which returns $\true$ if and only if $\s{x_R}$ is regular.
In that case,
$regular_{min}$ also outputs a string $\s{y_R}$ of scope 1 on an integer alphabet $\Sigma'$ that is isomorphic to \s{x_R}.
As explained below, it is then straightforward to compute a lex-least string \s{y} of scope 1 isomorphic to \s{x}.

\begin{figure}[htb]
{\leftskip=0.5in\obeylines\sfcode`;=3000
\bfunc $regular_{min}(\s{x_R};m,\s{y_R},\sigma')$ : Boolean
$\s{y_R}[1..m] \la 0^m;\ \sigma' \la 0$
\bfor $i = 1$ \bto $m$ \bdo
\qq \bif $\s{y_R}[i] = 0$ \bthen 
\qq\qq $\sigma' \la \sigma'\+ 1;\ \s{y_R}[i] \la \sigma'$
\qq\qq \bfor $j \la i\+ 1$ \bto $m$ \bdo
\qq\qq\qq \bif $\s{x_R}[j] \approx \s{x_R}[i]$ \bthen
\qq\qq\qq\qq \bif $\s{y_R}[j] = 0$ \bthen{} $\s{y_R}[j] \la \sigma'$
\qq\qq\qq\qq \belse $regular_{min} \la \false;\ \breturn$
\qq \belse
\qq\qq \bfor $j \la i\+ 1$ \bto $m$ \bdo
\qq\qq\qq \bif $\s{y_R}[j] = \s{y_R}[i]$ \bthen
\qq\qq\qq\qq \bif $\s{x_R}[j] \not\approx \s{x_R}[i]$ \bthen
\qq\qq\qq\qq\qq $regular_{min} \la \false$;\ \breturn
\qq\qq\qq \belsif $\s{x_R}[j] \approx \s{x_R}[i]$ \bthen
\qq\qq\qq\qq $regular_{min} \la \false$;\ \breturn
$regular_{min} \la \true$;\ \breturn{}
}
\caption{Determine whether ($regular_{min} = \true$) or not ($regular_{min} = \false$) a reduced string $\s{x_R}[1..m]$ of scope $S(\s{x_R}) > 1$ on alphabet $\Sigma$ is regular.  If $\true$, on exit the string $\s{y_R}$ is the lex-least regular string of scope 1 on the integer alphabet $\Sigma' = \{1,2,\ldots,\sigma'\}$ that is isomorphic to \s{x_R}.}
\label{fig-regularmin}
\end{figure}

In the function $regular_{min}$, we first initialize each letter in $\s{y_R}[1..m]$ to $0$.  Then we scan $\s{x_R}$ from left to right, using $\s{y_R}$ to record previous matches. During this scan the following condition holds as long as \s{x_R} is regular:
\begin{equation}\label{e:condition_c}
    \s{x_R}[i] \approx \s{x_R}[j] \Leftrightarrow \s{y_R}[i] = \s{y_R}[j] \wedge \s{y_R}[i] \neq 0.
\end{equation}
If at position $i \in \{1,\dots,m\}$, we find $\s{y_R}[i]=0$ ---  that is, it was not part of a previous match --- we form the new character 
$\sigma' \la \sigma' \+ 1$. We then scan right in the strings $\s{x_R}[i+1..m]$ and $\s{y_R}[i+1..m]$ to see if condition (\ref{e:condition_c}) continues to hold. If it does not, we mark $\s{x_R}$ as indeterminate and exit; otherwise, whenever $\s{x_R}[j] \approx \s{x_R}[i]$ and $\s{y_R}[j] = 0$, we assign $\s{y_R}[j] \la \sigma'$.

In $regular_{min}$ the inner loops execute $\binom{m}{2}$ times, and checking if two letters match takes $\O(\sigma)$ time for $\s{x}$ in normal form, $\O(\sigma\log\sigma)$ otherwise. Therefore:

\begin{lemm}
\label{lem:regularminc}
Algorithm $regular_{min}$ runs in $\O(m^2\sigma)$ time, where $m = |\s{x_R}|$ and $1\leq m \leq 2^\sigma$, when \s{x_R} is in normal form; otherwise, $\O(m^2 \sigma\log\sigma)$ time, where $\O(\sigma\log\sigma)$ is the maximum time needed to sort the characters in two letters. \qed
\end{lemm}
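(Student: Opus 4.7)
The plan is to bound separately (i) the total number of executions of the inner \textbf{for}-loop body of $regular_{min}$ and (ii) the worst-case cost of a single such execution, and then to multiply the two bounds. The initialization $\s{y_R}[1..m] \la 0^m$ costs $\O(m)$, which will be absorbed.

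For (i), I would observe that the outer \textbf{for} loop iterates $i$ from $1$ to $m$, and that in every outer iteration exactly one of the two branches is taken (either $\s{y_R}[i]=0$ or not). Each branch contains a single inner \textbf{for} loop running $j$ from $i\+ 1$ to $m$, contributing $m-i$ iterations. Summing gives
\[
\sum_{i=1}^{m-1}(m-i) \;=\; \binom{m}{2} \;=\; \O(m^2).
\]
The non-loop work in each outer iteration (the tests on $\s{y_R}[i]$, the increment of $\sigma'$, and the assignment to $\s{y_R}[i]$) is $\O(1)$, and likewise each inner body performs only a constant number of array accesses, integer comparisons, and assignments, apart from the single letter-matching test $\s{x_R}[j] \approx \s{x_R}[i]$, which is the dominant cost.

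For (ii), I would note that a letter over $\Sigma$ is a list of at most $\sigma$ distinct characters, so each of $\s{x_R}[i]$ and $\s{x_R}[j]$ has length at most $\sigma$. When $\s{x_R}$ is in normal form the two lists are already sorted in the order of $\Sigma$, and intersection/non-intersection can be decided by a single parallel merge scan in $\O(\sigma)$ time. When $\s{x_R}$ is not in normal form, each letter must first be sorted, at a cost of $\O(\sigma\log\sigma)$, after which the same merge costs $\O(\sigma)$; the per-test cost is then $\O(\sigma\log\sigma)$. Multiplying (i) by (ii) yields the two claimed bounds $\O(m^2\sigma)$ and $\O(m^2\sigma\log\sigma)$.

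Finally, the range $1 \le m \le 2^\sigma$ is immediate from the definition of $\s{x_R}$ in Section~\ref{sect-alg}: it contains exactly one occurrence of each distinct letter of $\s{x}$, and there are at most $2^\sigma-1 \le 2^\sigma$ possible letters, one per non-empty subset of $\Sigma$. I expect no real obstacle; the only point that merits care is to notice that the two branches of the outer body are mutually exclusive, so each outer iteration contributes $m-i$ rather than $2(m-i)$ inner iterations, and therefore the $\binom{m}{2}$ count is tight rather than off by a factor of two.
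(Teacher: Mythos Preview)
Your proposal is correct and follows essentially the same approach as the paper: the paper's entire justification is the single sentence preceding the lemma, stating that the inner loops execute $\binom{m}{2}$ times and each match test costs $\O(\sigma)$ (normal form) or $\O(\sigma\log\sigma)$ (otherwise). Your write-up simply fleshes out these two observations with the merge-scan argument for matching and the mutual-exclusivity remark for the branch count, and adds the easy justification of $1\le m\le 2^\sigma$; nothing substantive differs.
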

\begin{lemm}
\label{lemm:yRy}  When $\s{x}[1..n]$ is regular, the (lex-least) corresponding \s{y} can be computed from \s{y_R} in $\Theta(n)$ time.
\end{lemm}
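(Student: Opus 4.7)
The plan is to maintain, during the initial scan of $\s{x}$ that produces $\s{x_R}$, a direct-address lookup table $M$ indexed by the one-byte encoding $\ell \in \{0,1,\dots,255\}$ of each distinct letter, so that $M[\ell]$ records the position in $\s{x_R}$ at which $\ell$ was first inserted. Since the $found$ flags already scanned in building $\s{x_R}$ mark the first occurrence of each letter, writing $M[\ell] \la k$ at that moment adds only $\O(1)$ work per scanned position. Once $regular_{min}$ has returned $\s{y_R}$, I would overwrite $M[\ell] \la \s{y_R}[M[\ell]]$ for each of the $m \le 256$ distinct letters in $\O(m)$ time, after which a single left-to-right scan assigns $\s{y}[i] \la M[\s{x}[i]]$ and produces $\s{y}$ in $\O(n)$ additional time. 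The lower bound $\Omega(n)$ is trivial since $|\s{y}|=n$, so the total cost is $\Theta(n)$.

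For correctness of isomorphism, I would observe that $\s{x}[i]$ and $\s{x}[j]$ denote the same distinct letter of $\s{x}$ iff they have been routed to the same position of $\s{x_R}$ by $M$, and iff in turn $M[\s{x}[i]]=M[\s{x}[j]]$ in the final table; the isomorphism of $\s{y_R}$ with $\s{x_R}$ guaranteed by $regular_{min}$ then immediately transfers condition~(\ref{rule2}) from the reduced pair to $(\s{x},\s{y})$.

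The main obstacle is the lex-leastness claim. Here I would exploit two facts: (i) letters are inserted into $\s{x_R}$ strictly in the order of their first occurrence in $\s{x}$; (ii) $regular_{min}$ assigns the consecutive labels $1,2,\dots,\sigma'$ in exactly that same order. Any competing scope-$1$ string $\s{y}'$ isomorphic to $\s{x}$ must be constant on each matching equivalence class of $\s{x}$ (the cliques produced by Lemma~\ref{lemmreg}) and takes values in the positive integers $\{1,2,\dots\}$. Lex-minimality forces the first position of $\s{y}'$ to be $1$; inductively, if the first $k\- 1$ newly-encountered classes have already received the labels $1,\dots,k\- 1$, then at the first position of the $k$-th newly-encountered class any label smaller than $k$ would collide with an earlier class and violate isomorphism, while any larger label would be lexicographically suboptimal. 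Hence the lex-least assignment is forced to coincide with the labelling produced by $regular_{min}$ and propagated through $M$, so $\s{y}$ is the unique lex-least scope-$1$ string isomorphic to $\s{x}$, completing the proof.
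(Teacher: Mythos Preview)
Your proof is correct and follows essentially the same approach as the paper: build a map from each letter of $\s{x}$ to the position at which it lands in $\s{x_R}$, then compose with $\s{y_R}$ in a single $\Theta(n)$ scan (the paper calls this map $\xRpos$ and indexes it by position in $\s{x}$ rather than by the byte value, but the idea is identical). Your treatment of lex-leastness is in fact more explicit than the paper's, which simply records the two key observations---that letters enter $\s{x_R}$ in first-occurrence order and that $regular_{min}$ hands out labels $1,2,\dots,\sigma'$ in that same order---without spelling out the induction.
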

\begin{proof}
As noted above, the letters in $\s{x_R}[1..m]$ occur in order of their first appearance in \s{x}.  Moreover, the letters in \s{y_R} are assigned in increasing order $1,2,\ldots,\sigma'$ of occurrence of the positions of new letters in $\s{x_R}$.
Thus, to compute \s{y}, we first form $\xRpos[1..n]$, in which $\xRpos[j]$ is the position at which the letter $\s{x}[j]$ occurs in \s{x_R}. Therefore, the elements of $\xRpos$ are integers in the range $\{1 \dots m\}$. Clearly, $\xRpos[1..n]$ can be computed in $\Theta(n)$ time.
Then for every $i \in \{1 \dots n\}$, we compute
$$\s{y}[i] \la \s{y_R}\big[\xRpos[i]\big],$$
yielding a lex-least string of scope 1 isomorphic to \s{x}.
\end{proof}

\begin{lemm}
\label{lem:regularc} Algorithm REGULAR runs in $\Theta(n)$ time for a string $\s{x}[1..n]$ from a constant alphabet.
\end{lemm}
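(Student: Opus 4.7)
The plan is to walk through the three substantive steps of function REGULAR in Figure~\ref{fig-regular} and bound the cost of each, invoking the previously established lemmas, then combine.

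First, I would analyze the construction of the reduced string $\s{x_R}$. As described in the paragraph preceding Claim~\ref{claim-basic}, this is done by a single left-to-right scan of $\s{x}[1..n]$ using a Boolean array $found$ indexed by letter values, appending $\s{x}[i]$ to $\s{x_R}$ exactly when $found[\s{x}[i]] = \false$ and then setting $found[\s{x}[i]] \la \true$. Since $\sigma$ is constant and the number of distinct letters (regular plus indeterminate) is bounded by $2^\sigma$, the array $found$ has constant size, so initialization is $\O(1)$ and each of the $n$ lookups/updates is $\O(1)$. Thus this step runs in $\Theta(n)$ time and produces $\s{x_R}$ with $m \le 2^\sigma$.

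Next, I would invoke Lemma~\ref{lem:regularminc}: the call to $regular_{min}(\s{x_R},m,\s{y_R},\sigma')$ costs $\O(m^2\sigma)$ time (in normal form), or $\O(m^2\sigma\log\sigma)$ otherwise. Under the constant-alphabet assumption, both $\sigma$ and the upper bound $2^\sigma$ on $m$ are constants; consequently $m^2\sigma$ and $m^2\sigma\log\sigma$ are both $\O(1)$. Hence this call contributes only $\O(1)$ to the overall running time, independent of $n$.

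Finally, the conditional step in REGULAR that constructs $\s{y}$ from $\s{y_R}$ takes $\Theta(n)$ time by Lemma~\ref{lemm:yRy}; if instead $\s{x}$ is found to be indeterminate, this step is skipped and contributes no cost. Summing the three contributions $\Theta(n) + \O(1) + \O(n)$ yields a total of $\Theta(n)$, which establishes the claim. I do not anticipate any genuine obstacle here: the whole argument hinges on observing that constancy of $\sigma$ forces constancy of $m$ through the bound $m \le 2^\sigma$, after which the quadratic dependence on $m$ in $regular_{min}$ is absorbed into the constant factor and the linear passes over $\s{x}$ dominate.
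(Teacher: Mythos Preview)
Your proof is correct and follows essentially the same three-step decomposition as the paper: bound the construction of $\s{x_R}$ by $\Theta(n)$ (Claim~\ref{claim-basic}), bound $regular_{min}$ by a constant via Lemma~\ref{lem:regularminc} since $m$ and $\sigma$ are constant, and bound the construction of $\s{y}$ by $\Theta(n)$ via Lemma~\ref{lemm:yRy}. The only difference is that you make the bound $m \le 2^\sigma$ explicit, whereas the paper simply asserts that $m$ is constant; this is a minor elaboration, not a different approach.
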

\begin{proof}
The running time of REGULAR is equal to the total time required for
(1) construction of \s{x_R} from \s{x};
(2) execution of $regular_{min}$;
(3) construction of \s{y} from \s{y_R}.

As we have seen (Claim 1), (1) requires $\Theta(n)$ time, while 
Lemma~\ref{lem:regularminc} tells us that (2) requires $\O(m^2 \sigma\log\sigma)$ time. However, since $m$ and $\sigma$ are constants, (2) requires constant time. Further, as explained in Lemma~\ref{lemm:yRy},
\s{y} can be computed from \s{y_R} in $\Theta(n)$ time.
Therefore, we conclude that the running time of REGULAR is $\Theta(n)$.\qed
\end{proof}

At a cost of adding $\Theta(m^2)$ bits of additional storage, together with $\O(m^2\sigma)$ preprocessing time, the regularity of the reduced string \s{x_R} can actually be determined in $\O(m^2)$ time in all cases, as we now explain.
Essentially, we replace \s{x_R} by a symmetric adjacency matrix $M^{\approx}_{\s{x_R}} = M^{\approx}_{\s{x_R}}[1..m,1..m]$ based on the matches in \s{x_R}: for $1 \le i,j \le m$,
\begin{eqnarray}
M^{\approx}_{\s{x_R}}[i,j] = 
\begin{cases}
1 & \mbox{ if } \s{x_R}[i] \approx \s{x_R}[j]; \\ 
0 & \mbox{ otherwise}.
\end{cases}
\end{eqnarray}
Since each match may require $\O(\sigma)$ time, this construction can be performed in $\O(m^2\sigma)$ time in the worst case.

Then, simply replacing references to \s{x_R} in function {\it regular\_{min}} by references to $M^{\approx}_{\s{x_R}}$, we obtain the $\O(m^2)$ function {\it regular\_{min}\_matrix} (Figure~\ref{fig-regularminM}).

\begin{figure}[htb]
{\leftskip=0.5in\obeylines\sfcode`;=3000
\bfunc $regular_{min\_matrix}(M^{\approx}_{\s{x_R}},m;\s{y_R},\sigma')$ 
$\s{y_R}[1..m] \la 0^m;\ \sigma' \la 0$ 
\bfor $i = 1$ \bto $m$ \bdo
\qq \bif $\s{y_R}[i] = 0$ \bthen 
\qq\qq $\sigma' \la \sigma'\+ 1;\ \s{y_R}[i] \la \sigma'$
\qq\qq \bfor $j \la i\+ 1$ \bto $m$ \bdo
\qq\qq\qq \bif $M^{\approx}_{\s{x_R}}[i,j]=1$ \bthen
\qq\qq\qq\qq \bif $\s{y_R}[j] = 0$ \bthen{} $\s{y_R}[j] \la \sigma'$
\qq\qq\qq\qq \belse $regular_{min\_matrix} \la \false;\ \breturn$
\qq \belse
\qq\qq \bfor $j \la i\+ 1$ \bto $m$ \bdo
\qq\qq\qq \bif $\s{y_R}[j] = \s{y_R}[i]$ \bthen
\qq\qq\qq\qq \bif $M^{\approx}_{\s{x_R}}[i,j] = 0$ \bthen
\qq\qq\qq\qq\qq $regular_{min\_matrix} \la \false$;\ \breturn
\qq\qq\qq \belsif $M^{\approx}_{\s{x_R}}[i,j]=1$ \bthen
\qq\qq\qq\qq $regular_{min\_matrix} \la \false$;\ \breturn
$regular_{min\_matrix} \la \true$;\ \breturn{}
}
\caption{Determine in $\O(m^2)$ time whether ($regular_{min\_matrix} = \true$) or not ($regular_{min\_matrix} = \false$) an $m \times m$ bit matrix $M^{\approx}_{\s{x_R}}$ is regular. In the former case, the algorithm computes, corresponding to $M^{\approx}_{\s{x_R}}$, a lex-least regular string \s{y_R} of scope 1 on the integer alphabet $\Sigma' = \{1,2,\ldots,\sigma'\}$.
}
\label{fig-regularminM}
\end{figure}

Effectively, $M^{\approx}_{\s{x_R}}$ defines an undirected graph $G^{\approx}_{\s{x_R}} = (V,E)$ on $m$ vertices, where there exists an edge $(i,j) \in E$ if and only if $M^{\approx}_{\s{x_R}}[i,j] = 1$.  The match relation $\approx$ is necessarily both \itbf{reflexive} ($M^{\approx}_{\s{w}}[i,i] = 1$) and \itbf{symmetric} ($M^{\approx}_{\s{w}}[i,j] = M^{\approx}_{\s{w}}[j,i])$ for any string \s{w}.  If in addition our string $\s{w} = \s{x_R}$ is regular, it follows immediately from our definition that the match relation is transitive:
$$M^{\approx}_{\s{x_R}}[i,j_1] = M^{\approx}_{\s{x_R}}[i,j_2] \Longleftrightarrow M^{\approx}_{\s{x_R}}[j_1,j_2] = M^{\approx}_{\s{x_R}}[i,j_1].$$
Thus for a regular string \s{x_R}, algorithms $regular_{min}$ and $regular_{min\_matrix}$ effectively compute the \itbf{transitive closure} of the match relation $\approx$ on \s{x_R} (equivalently, of the adjacent vertices in $G^{\approx}_{\s{x_R}}$); that is, they identify the maximum sets of positions that match in \s{x_R} (the cliques in $G^{\approx}_{\s{x_R}}$).

In the literature the Floyd-Warshall algorithms \cite{Cormen01, Floyd62} are the standard for computing the transitive closure on a graph $G = (V,E)$, with $|V| = m$.  These algorithms execute in time $\O(m^3)$, and it is a long-standing open problem whether a faster algorithm exists.  They are very general, not assuming the reflexive and symmetric conditions that hold in our case.  Nevertheless, we have:
\begin{lemm}\label{lem:tc}
For a relation on $m$ objects that is reflexive and symmetric, the transitive closure can be computed in $\O(m^2)$ time. \qed
\end{lemm}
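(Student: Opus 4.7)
The plan is to reduce the transitive closure problem, under the reflexive and symmetric hypothesis, to computing connected components of an undirected graph, which is a classical linear-time task.

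First I would observe that any reflexive and symmetric relation $R$ on $m$ objects is naturally represented by an undirected graph $G = (V,E)$ with $|V| = m$, where $(i,j) \in E$ iff $i \, R \, j$ (self-loops being irrelevant because of reflexivity). The central claim is that the transitive closure $R^*$ of $R$ is precisely the equivalence relation whose classes are the connected components of $G$. In one direction, if $i$ and $j$ lie in the same component, there is a path $i = v_0, v_1, \ldots, v_k = j$ with $v_t \, R \, v_{t+1}$, so $i \, R^* \, j$ by transitive extension. In the other direction, $R^*$ is contained in the smallest transitive relation containing $R$; since the ``same component'' relation is transitive and contains $R$, it contains $R^*$.

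Next I would describe the algorithm. Run BFS (or DFS) starting from each unvisited vertex of $G$, labeling each vertex with the index of its component. This takes $\O(|V| + |E|) = \O(m + m^2) = \O(m^2)$ time, since the input adjacency representation already has size $\Theta(m^2)$. Then, to produce the transitive closure matrix $M^*$, allocate an $m \times m$ bit matrix and, for each pair $(i,j)$, set $M^*[i,j] \la 1$ iff $i$ and $j$ carry the same component label. This fill step is $\O(m^2)$ as well; alternatively, one can group vertices by label in $\O(m)$ time and write out each clique block directly, which is still bounded by the output size $\O(m^2)$.

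Adding these costs gives the claimed $\O(m^2)$ bound. The only subtle step is the equivalence between the transitive closure and the connected-components partition, which relies essentially on symmetry (so that ``reachability'' via $R$ is itself symmetric) and reflexivity (so that each vertex lies in its own class); without these, the reduction fails and one is back to the general $\O(m^3)$ Floyd--Warshall regime. Everything else is routine graph traversal, and no non-trivial data structure is required.
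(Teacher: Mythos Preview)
Your proof is correct. The paper itself offers no explicit proof of this lemma: it is stated with an immediate \qed, the intended justification being the preceding discussion of the $regular_{min\_matrix}$ routine, which runs in $\O(m^2)$ time on the adjacency matrix $M^{\approx}_{\s{x_R}}$.

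There is a genuine difference worth noting. The paper's $regular_{min\_matrix}$ does not in fact compute the transitive closure of an arbitrary reflexive symmetric relation; it \emph{tests} whether the relation is already transitive (equivalently, whether the associated graph is a disjoint union of cliques) and returns \texttt{false} otherwise. So the paper's implicit argument covers only the case where $R = R^*$ already. Your reduction to connected components via BFS/DFS is the standard and correct way to handle the general case: symmetry makes reachability an equivalence relation, the components are the classes of $R^*$, and both the traversal and the output matrix cost $\O(m^2)$. Your argument therefore actually proves the lemma as stated, whereas the paper's surrounding discussion, read literally, does not.
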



\section{Strings from the Maximal Palindrome Array}
\label{sect-maxpal}
Given $\s{x} = \s{x}[1..n]$, a \itbf{substring} $\s{u} = \s{x}[i..j]$, $1 \le i \le j \le n$, of length $\ell = j\- i\+ 1$
is said to be a \itbf{palindrome}
if $\s{x}[i\+ h] \approx \s{x}[j\- h]$ for every $h \in 0..\floor{\ell/2}$;
a \itbf{maximal palindrome} if
one of the following holds:
$i = 1$, $j = n$, or
$\s{x}[i\- 1] \not\approx \s{x}[j\+ 1]$.
The \itbf{centre} of a palindrome \s{u} is at position $i \+ \frac{\ell\- 1}{2}$,
which is not an integer for odd $\ell$.
Thus to facilitate discussion of the palindromic structure of \s{x},
we form the string
$$\s{x^*}[1..m] = \# x_1\# x_2 \# \cdots \# x_n\#,$$
where $\# \not\in \Sigma$ and $m = 2n\+ 1$.
Now every palindromic substring of \s{x} of length $\ell \ge 1$
maps into a palindromic substring of \s{x^*} of odd length $d = 2\ell\+ 1$,
and every palindrome in \s{x^*} has an integral centre position.
We call $d$ the \itbf{diameter} and
$r = \floor{d/2}$ the \itbf{radius}
of the palindrome in \s{x^*}.

We can now define the \itbf{maximal palindrome array} $\MP = \MP_{\s{x^*}}$ of \s{x^*}:
for every $i \in \{1, \dots, m\}$,
if $\s{x^*}[i] = \#$ and $\s{x^*}[i\- 1] \not\approx \s{x^*}[i\+ 1]$, then $\MP[i] = 0$ (radius zero);
otherwise, $\MP[i]\geq 1$ is the radius of the maximal palindrome centred at position $i$.
We assume that $\s{x^*}[0]=\s{x^*}[m+1]=\emptyset$. For example, $\MP_{\s{x^*}}$ derived from $\s{x} = aabac$ is as follows:
\begin{equation}
\label{ex1}
\begin{array}{rcccccccccc}
\scriptstyle 1 & \scriptstyle 2 & \scriptstyle 3 & \scriptstyle 4 & \scriptstyle 5 & \scriptstyle 6 & \scriptstyle 7 & \scriptstyle 8 & \scriptstyle 9 & \scriptstyle 10 & \scriptstyle 11 \\ 
\s{x^*} = \# & a & \# & a & \# & b & \# & a & \# & c & \# \\ 
\MP_{\s{x^*}} = 0 & 1 & 2 & 1 & 0 & 3 & 0 & 1 & 0 & 1 & 0
\end{array}
\end{equation}
The most general form of the palindrome array is given by
\begin{equation}
\label{generalform}
\MP = 0 i_2 i_3 \cdots i_{m-1} 0,
\end{equation}
where for every $j \in \{2,\dots,m-1\}$:
\begin{itemize}
\item[(a)]
$i_j \in (1\- j\bmod 2)..\min(j\- 1,m\- j)$;
\item[(b)]
$i_j$ is odd if and only if $j$ is even.
\end{itemize}

Any array satisfying (\ref{generalform}) is said to be \itbf{feasible}.

\begin{lemm}
\label{lemm-exist}
There exists a string corresponding to
every feasible palindrome array.
\end{lemm}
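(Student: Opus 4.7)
The plan is to construct a string $\s{x}$ explicitly from $\MP$, using indeterminate letters to prescribe exactly the matches forced by $\MP$, and then to verify that the maximal palindrome array of the resulting $\s{x^*}$ coincides with $\MP$. Since indeterminate letters let us realise any symmetric, reflexive match relation on positions, the core difficulty is to show that the ``match'' and ``non-match'' constraints one reads off from a feasible $\MP$ are mutually consistent.

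First I would isolate two sets of constraints on pairs $\{p,q\}$, with $p<q$, of distinct letter positions in $\s{x^*}$ (i.e.\ even positions). Let $R$ consist of those pairs whose midpoint $c=(p+q)/2$ satisfies $\MP[c]\ge (q-p)/2$; these must match, since they are symmetric inside the maximal palindrome centred at $c$. Let $N$ consist of those pairs with $\MP[c]=(q-p)/2-1$ and $1\le p$, $q\le m$; these must not match, by maximality at $c$. The key observation is that any pair of letter positions has a unique midpoint $c$ in $\s{x^*}$ and a unique half-distance $h=(q-p)/2$, so membership in $R$ and $N$ is decided purely by comparing $h$ to $\MP[c]$, and the conditions $h\le \MP[c]$ and $h=\MP[c]+1$ are mutually exclusive. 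Hence $R\cap N=\emptyset$. Feasibility conditions (a) and (b) enter here by guaranteeing that the pairs occurring in $N$ are genuinely letter pairs (both endpoints even and in range) and that the parity of $\MP[c]$ matches that prescribed by $c$.

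Next, I would build $\s{x}$ by introducing, for each $i\in\{1,\ldots,n\}$, a fresh regular character $\beta_i$ and, for each $\{p,q\}\in R$, a fresh character $\alpha_{p,q}$, and setting
\[
\s{x}[i]=\{\beta_i\}\cup\bigl\{\alpha_{p,q}:\{p,q\}\in R,\ 2i\in\{p,q\}\bigr\}.
\]
With this definition, $\s{x}[i]\approx\s{x}[j]$ for $i\ne j$ iff some atom is common to both letters, which by construction happens exactly when $\{2i,2j\}\in R$. Verification then reduces to checking, for every $c\in\{1,\ldots,m\}$, that the maximal palindrome of $\s{x^*}$ at $c$ has radius exactly $\MP[c]$: every symmetric pair strictly inside this palindrome is either a pair of $\#$'s (trivial match) or a letter pair lying in $R$ by construction, while the boundary pair at half-distance $\MP[c]+1$, when it lies within $\s{x^*}$, belongs to $N$ and so fails to match; when that boundary pair falls outside $\s{x^*}$, condition (a) forces $\MP[c]$ to be the largest radius allowed by the string ends, so the palindrome cannot be extended in $\s{x^*}$ either.

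The main obstacle is the disjointness claim $R\cap N=\emptyset$; once established, the construction and verification are a matter of tracking the two parity cases for the centre $c$ dictated by condition (b). Intuitively, the freedom offered by indeterminate letters is exactly what makes the statement hold for every feasible $\MP$, whereas if we insisted on a regular string we would additionally need Manacher-style nesting constraints on $\MP$.
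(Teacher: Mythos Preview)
Your proposal is correct and follows essentially the same construction as the paper: introduce a fresh shared character for every letter pair that must match inside some maximal palindrome, and a fresh private character elsewhere. The paper's proof is terser---it simply adds a unique character to each symmetric pair $c\pm k$ inside each palindrome and a unique character to any position left empty, then asserts the result has $\MP$ as its palindrome array---whereas you make the verification explicit by isolating the required-match set $R$ and the forbidden set $N$, proving $R\cap N=\emptyset$, and checking both the interior matches and the boundary non-match (or boundary overrun) at every centre. Your extra $\beta_i$ at every position is harmless (the paper only adds it at empty positions), and your use of feasibility condition (b) to ensure the forbidden pair always lands on letter positions is exactly the parity check the paper leaves implicit.
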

\begin{proof}
Suppose that a feasible palindrome array  $\MP = \MP_{\s{x^*}}[1..m]$ is given
for some odd positive integer $m$.
We show how to construct a corresponding \s{x^*}.  
First, for every odd $c \in \{1,\dots,m\}$,
we may of course assign 
$\s{x^*}[c] \la \#$.
Suppose that the even positions $c$ in \s{x^*} are initially empty.
For every $c \in \{3, \dots, m\- 2\}$ such that 
$\MP[i] = r \ge 2$, add a unique character
to each pair of positions $c\- k,c\+ k$
in \s{x^*},
where
\begin{itemize}
    \item[$\bullet$] ($c$ even, $r$ odd) 
    $k = 2,4,\ldots,r\- 1$;
    \item[$\bullet$] ($c$ odd, $r$ even)
    $k =1,3,\ldots,r\- 1$.
\end{itemize}
Finally, assign a unique character to each position that remains empty. The resulting (perhaps indeterminate) string will have MP as its palindrome array. \qed
\end{proof}
Applying the construction of Lemma~\ref{lemm-exist} to
$\MP_{\s{x^*}} = 0 1 2 1 0 3 0 1 0 1 0$ given in (\ref{ex1}) yields a string such as:
\begin{equation}
\label{ex2}
\begin{array}{rcccccccccc}
\scriptstyle 1 & \scriptstyle 2 & \scriptstyle 3 & \scriptstyle 4 & \scriptstyle 5 & \scriptstyle 6 & \scriptstyle 7 & \scriptstyle 8 & \scriptstyle 9 & \scriptstyle 10 & \scriptstyle 11 \\ 
\s{x^*} = \# & a & \# & \{a,b\} & \# & c & \# & b & \# & d & \# \\ 
\MP_{\s{x^*}} = 0 & 1 & 2 & 1 & 0 & 3 & 0 & 1 & 0 & 1 & 0
\end{array}
\end{equation}
Note that, since the triple $(4,2,8)$ is
intransitive ($\s{x}[2] \not\approx \s{x}[8]$), \s{x^*} is indeterminate.
However, as we have seen in (\ref{ex1}), the regular string
$\# a \# a \# b \# a \# c \#$
has the same palindrome array:
a palindrome array can
correspond to both a regular and an
indeterminate string!

To each position $c \in \{1, \dots, m\}$ of a feasible $\MP$ array we associate a pair of integers $(i,j)$ such that $i=c-\MP[c] - 1$ and $j= c+ \MP[c] + 1$; then, provided $0<i,j< m+1$, we must have $\s{x^*}[i] \not \approx \s{x^*}[j]$. 
We call this pair $(i,j)$ the \itbf{forbidden pair} with respect to $c$, and the characters at $\s{x^*}[i], \s{x^*}[j]$ \itbf{forbidden characters} with respect to $c$. For processing purposes, it will be convenient to assume that $\s{x^*}[0]=\s{x^*}[m+1]=\emptyset$.  We denote by $\FP$ the set of all forbidden pairs with respect to each centre $c \in \{1, \dots, m\}$. See (\ref{ex:nonrmp})
for an example. 

Consider the feasible $\MP$ array given in~(\ref{ex:nonrmp}). 
This array does not correspond to any regular string because the triple $(6,2,4)$ is intransitive ($\s{x^*}[2] \not \approx \s{x^*}[4]$), since the pair $(2,4)$ is a forbidden pair with respect to the centre $3$. 
Indeed, any triple of $\s{x^*}$ which includes a forbidden pair is intransitive, resulting in an indeterminate string.
However, if a given $\MP$ array does correspond to a regular string, we will say that it is \itbf{regular}.

\begin{equation}
\label{ex:nonrmp}
\begin{array}{rcccccc}
c \qq 1 & 2 & 3 & 4 & 5 & 6 & 7 \\
MP \qq 0 & 1 & 0 & 3 & 2 & 1 & 0 \\
\s{x^*} \qq \# & 1 & \# & \{2,3\} & \# & \{1,3\} & \# \\
 \FP \qq (0,2) & (0,4) & (2,4) & (0,8) & (2,8) &  (4,8) & (6,8)
\end{array}
\end{equation}

In order to characterize MP arrays, it is useful to introduce
Manacher's condition \cite{manacher:75}, restated in \cite{IIBT10},
and further restated here.

In $\MP = \MP_{\s{x^*}}$, we consider each centre $c$
of a palindrome of radius $r = \MP[c]$, where for $c$ even,
$r \ge 3$, and for $c$ odd, $r \ge 2$.
Then we must have $\s{x^*}[c\- k] \approx \s{x^*}[c\+ k]$,
where $1 \le k \le r$.
We then have
\begin{defn}\label{defM} [Manacher's condition]
Let $r_{\ell} = \MP[c\- k]$ and $r_r = \MP[c\+ k]$, for each $1 \le k \le r$, where $r = \MP[c]$.
Every position $c$ in a regular palindrome array
$\MP = \MP_{\s{x^*}}$ satisfies the following:
\begin{itemize}
    \item[(a)]
    \bif $r_{\ell} \ne r\- k$ \bthen $r_r = \min(r_{\ell},r\- k)$
    \belse $r_r \ge r_{\ell}$;
    \item[(b)]
    \bif $r_r \ne r\- k$ \bthen $r_{\ell} = \min(r_r,r\- k)$
    \belse $r_{\ell} \ge r_r$.
\end{itemize}
\end{defn}

For completeness, we outline here Manacher's algorithm to construct an $\MP$ array for a given string $\s{x^*}$ of scope $1$. The algorithm scans $\s{x^*}$ from left to right and evaluates centres to compute $\MP[c]$. While evaluating a centre $c$ and each value of the range $k$, it checks whether $r_{\ell}=r-k$; if so, it stops evaluating centre $c$ and moves to evaluating centre $c\+ k$; otherwise it assigns $r_r \la \min(r_{\ell},r\- k)$ and continues evaluating the same centre $c$ and the next range value $k+1$.

In \cite{manacher:75} it is shown that, for every palindrome in string
\s{x^*} such that $S(\s{x^*}) = 1$,
Manacher's condition must hold.
Thus, by Lemma~\ref{lemmreg}, Manacher's condition holds for every
regular string; that is, for every string whose triples
are all transitive.

On the other hand, note that in example (\ref{ex:nonrmp}),
for $c = 4$ we have
$$r = \MP[4] = 3,\ k = 1,\ r_{\ell} = \MP[3] = 0,\ r_r = \MP[5] = 2,$$
so that, by Definition~\ref{defM}(a), since $r_{\ell} \ne r\- k$,
therefore $r_r = \min(0,2) = 0$ should hold, which is false.
Thus for this indeterminate string, Manacher's condition does {\it not} hold.

The above suggests extending the construction of Lemma~\ref{lemm-exist}
so as to yield, whenever possible, a regular string
corresponding to a given palindrome array.
See Figure~\ref{fig-construct}. 

The procedure $construct$ takes a feasible $\MP$ array as input and returns a lex-least regular string $\s{x^*}$ if $\MP$ is regular; otherwise it returns an indeterminate string. This procedure is based on the Manacher's algorithm presented in~\cite{manacher:75}. 
To ensure that the resulting string, if regular, is a lex-least string, it maintains for each centre $c$ a set $\FS[c]$ containing the indices of the forbidden characters. 

\begin{figure}[h!]
	{\leftskip=0.5in\obeylines\sfcode`;=3000
		\bfunc $construct(\MP,m;\ \s{x^*},\sigma,regular)$
		\com{$\MP[1..m]$ is assumed to be feasible, and $m=|\s{x^*}|$}.
		\com{$\FS[1..m]$ gives the index position of all forbidden characters w.r.t each centre of the $\MP$ array.}
		\com{$empty[1..m]$ is a Boolean array initially set to true, where }
		\com{$empty[c]=true$ whenever $update\FS$ has not been called for centre $c$.}
		$regular \la \true;\ \sigma \la 0;\ \s{x^*}[1] \la \#;\ \FS[1] \la \emptyset$
		\bfor $c \la 2$ \bto $m\m 1$ \bstep 2 \bdo \Comment{Initialize $\s{x^*}$ and $\FS$ array.}
		\qq $\s{x^*}[c] \la 0;\ \FS[c] \la \emptyset$ 
		\qq $\s{x^*}[c+1] \la \#;\ \FS[c+1] \la \emptyset$
		$empty[1..2]=false ;\ empty[3..m] = true$
		$\sigma \la \sigma\+ 1;\ \s{x^*}[2] \la \sigma$
		\bfor $c \la 3 $ \bto $m\m 1$ 
		\qq $r \la \MP[c]$
		\qq \bif $regular$ \band $empty[c]$ \bthen 
		\qq\qq $empty[c] \la false;\ update\FS(c,r, m,\FS)$
		\qq \bif $r > 1$ \bthen
	    \qq\qq $nextc \la c\+ r \+ 1$
		\qq\qq \bfor $k \la r$ \bdownto $1$ \bdo

        \qq\qq\qq $r_{\ell}=\MP[c\m k] ;\ r_{r}=\MP[c\+k]$
        \qq\qq\qq \com{$MC$ is a function that tests Manacher's condition}
		\qq\qq\qq \com{for centre $c$, radius $r$ and range $k$.}
		\qq\qq\qq \bif $regular$ and $MC(c,r,k)$ \bthen
		\qq\qq\qq\qq \bif $empty[c\+ k]$ \bthen 
		\qq\qq\qq\qq\qq $empty[c\+ k] \la false ;\ update\FS(c \+ k,r_r, m,\FS)$
		\qq\qq\qq\qq \bif $\s{x^*}[c\+ k] = 0$ \bthen $\s{x^*}[c\+ k] \la \s{x^*}[c \m k]$ 
        \qq\qq\qq\qq \bif $r_{\ell} = r-k$ and $r_r > r_{\ell}$ \bthen  $nextc\la c+k$
		\qq\qq\qq \belse
		\qq\qq\qq\qq $regular \la false $
		
		\qq\qq\qq\qq \bif $c \+ k \mod 2 = 0$ \band $\s{x^*}[c\m k] \not\approx \s{x^*}[c\+ k]$ \bthen  
		\qq\qq\qq\qq\qq $\sigma \la \sigma\+ 1;\ \s{x^*}[c\m k] \stackrel{+}{\la} \sigma;\ \s{x^*}[c\+ k] \stackrel{+}{\la} \sigma$
		\qq\qq\qq $k \la k \m 1$
		\qq \bif $\s{x^*}[c] = 0$ \bthen \Comment{Fill empty position at centre.}
		\qq\qq \bif $regular$ \bthen 
		\qq\qq\qq $\s{x^*}[c] \la $ Lex-least character not in $\FS[c]$
		\qq\qq \belse $\sigma \la \sigma\+ 1;\ \s{x^*}[c] \stackrel{+}{\la} \sigma$
		\qq \bif $regular$ \bthen $c \la nextc$
		\qq \belse $c \la c \+ 1$
    \breturn $\s{x^*}$
	}

\caption{Given a feasible palindrome array
MP, $construct$ if possible produces a corresponding lex-least regular string \s{x^*} of scope 1
on an integer alphabet $\Sigma = \{1,2,.\ldots\}$.
If such a string is not possible, then on exit $regular = \false$ and \s{x^*}
is an indeterminate string on $\Sigma$ corresponding to MP.
}
\label{fig-construct}
\end{figure}

\begin{figure}[h!]
	{\leftskip=0.5in\obeylines\sfcode`;=3000
		\bproc $update\FS(c, r, m, \FS)$
		\bif $ c\m r \m 1 \neq 0$ \band $c \+ r \+ 1  \neq m \+ 1$ \bthen
		\qq $\FS[c\+ r \+ 1] \stackrel{+}{\la} c-r-1$
	}
\caption{Given centre $c$ and its radius $r$, $update\FS$ updates the $\FS$ array at the left position in the forbidden pair with respect to $c$.}
\label{fig-updateFS}
\end{figure}

In procedure $construct$ the output string is first initialized with $0$ at even positions and $\#$ at odd positions. Then $\s{x^*}[2]$ is set to $1$, as we are interested in the lex-least regular $\s{x^*}$, if it exists. Next we examine a centre $3 \leq c \leq m \m 1$ and update the $\FS[c]$ set based on its forbidden pair. If the string $\s{x^*}[1..c\m 1]$ constructed so far is regular and the centre $c$ and range $k$ satisfy Manacher's condition, we continue to construct the regular string by copying the previously filled letter $\s{x^*}[c\m k]$ to its corresponding matching position $c\+ k$. Whenever there is a choice of filling an empty position --- that is, when $\s{x^*}[c]=0$ --- the lex-least character which is not in the forbidden set of characters $\FS[c]$ is chosen. Note that we use the same strategy as presented in Manacher's algorithm~\cite{manacher:75} to avoid evaluating every centre if $\MP$ is regular. However, unlike Manacher's algorithm, instead of evaluating a new centre when $r_{\ell}=r-k$ and $r_r > r_{\ell}$, we record the next centre to be evaluated, and finish evaluating the current centre. After which we move to the recorded centre and check if Manacher's condition is satisfied.

Furthermore, to avoid invoking the $update\FS$ procedure repeatedly on a centre $c$, we use a Boolean array $empty$ of length $m$, such that $empty[c] = true$ if and only if the centre $c$ has not been used to update the $\FS$ array; otherwise, we set $empty[c] \la false$. If a given centre $c$ and range $k$ do not satisfy Manacher's condition, $regular$ is set to $false$ and every subsequent letter match including the current one is achieved by adding a new character to the letters at positions $c\m k$ and $c\+ k$, if they are even. Furthermore, for the remainder of the $\MP$ array all centres are examined. 

\begin{lemm}\label{lemm-regconstx}
Let $\s{x^*}$ be the string produced by procedure $construct$. Then
$S(\s{x^*})=1 \Leftrightarrow \s{x^*}$ is regular.
\end{lemm}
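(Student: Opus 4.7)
The forward implication is immediate: if every letter of $\s{x^*}$ has scope $1$, then $\approx$ coincides with character equality on $\s{x^*}$, which is transitive, so every triple in $\s{x^*}$ is transitive and $\s{x^*}$ is regular.

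For the converse I would prove the contrapositive. By inspection of $construct$, the scope of a position $p$ can exceed $1$ only through the line $\s{x^*}[c-k] \stackrel{+}{\la} \sigma;\ \s{x^*}[c+k] \stackrel{+}{\la} \sigma$ inside the else-branch; the centre-filling alternative $\s{x^*}[c] \stackrel{+}{\la} \sigma$ is guarded by $\s{x^*}[c]=0$ and therefore always leaves scope $1$. The else-branch is reached only when either the monotone flag $regular$ is already $false$ or Manacher's condition $MC(c,r,k)$ fails, and $regular$ is flipped to $false$ exclusively within that same branch. Consequently any scope-raising assignment traces back to a first centre $c_0$ and range $k_0$ at which $MC(c_0, r_0, k_0)$ fails, and it suffices to show that such a failure produces an intransitive triple which survives to termination.

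At $(c_0, k_0)$, following the analysis illustrated by~(\ref{ex:nonrmp}), the palindrome at $c_0$ demands $\s{x^*}[c_0-k_0] \approx \s{x^*}[c_0+k_0]$, while the radii $r_\ell = \MP[c_0-k_0]$ and $r_r = \MP[c_0+k_0]$ together with Manacher's inequalities force a second matching chain through an overlapping palindrome centred at $c_0-k_0$ or $c_0+k_0$, yet a forbidden pair previously recorded in $\FS$ by $update\FS$ prohibits one of the three matches implied. This exhibits positions $a, b, c$ with $\s{x^*}[a] \approx \s{x^*}[b] \approx \s{x^*}[c]$ but $\s{x^*}[a] \not\approx \s{x^*}[c]$. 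The main obstacle is confirming that this triple is not subsequently repaired: every later $\stackrel{+}{\la}$ stamps a freshly minted integer $\sigma$ into only the two positions named on its line, so it cannot simultaneously reach $a$ and $c$ unless a still earlier forbidden-pair conflict existed, and the characters originally placed at $a, b, c$ are never removed. Hence the intransitive triple persists in the final $\s{x^*}$, so $\s{x^*}$ is indeterminate and the contrapositive holds.
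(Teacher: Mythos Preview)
Your forward implication matches the paper's argument exactly and is fine.

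For the converse, the paper takes a much shorter route than you do. It does \emph{not} attempt to locate an intransitive triple inside the output string. Instead it invokes the result, stated just before Definition~\ref{defM} and attributed to \cite{manacher:75}, that the palindrome array of any regular string satisfies Manacher's condition at every centre. Hence if the output $\s{x^*}$ is regular, the input $\MP$ (which is its palindrome array) satisfies Manacher's condition throughout; but then the \textbf{else} branch of $construct$ is never entered, only copying of scope-$1$ letters and insertion of fresh single characters occur, and $S(\s{x^*})=1$ follows directly.

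Your contrapositive route is in principle workable, but the execution has real gaps. First, the intransitive triple $a,b,c$ is never actually produced: you gesture at ``a second matching chain through an overlapping palindrome'' and ``a forbidden pair previously recorded in $\FS$'', citing example~(\ref{ex:nonrmp}), but never carry out the case analysis on how $MC(c_0,r_0,k_0)$ fails (there are several distinct failure modes in Definition~\ref{defM}), nor identify which three positions witness intransitivity in each case. Second, the persistence argument is not sound as written. You claim a later $\stackrel{+}{\la}$ ``cannot simultaneously reach $a$ and $c$'', but nothing prevents $\{a,c\}=\{c'-k',\,c'+k'\}$ for some subsequent centre $c'$ with $k'\le\MP[c']$; in that case the procedure would add the \emph{same} fresh character to both, making $\s{x^*}[a]\approx\s{x^*}[c]$ and destroying your forbidden pair. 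Ruling this out requires knowing that the final $\s{x^*}$ really has $\MP$ as its palindrome array, which is exactly what the paper's approach relies on implicitly and which your argument tries to avoid. If you want to keep a direct argument, you must either prove that correctness fact first or supply the missing case analysis and a rigorous persistence proof; otherwise, the paper's one-line appeal to the Manacher result is both simpler and complete.
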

\begin{proof}
($\Rightarrow$) Since $S(\s{x^*})=1$,
the regularity of $\s{x^*}$ is immediate, as
for all positions $1 \leq i,j \leq m$, $\s{x^*}[i] \approx \s{x^*}[j] \Leftrightarrow \s{x^*}[i] = \s{x^*}[j]$. Then any triple in $\s{x^*}$ is transitive by the associative property of equality. Therefore $\s{x^*}$ is regular.

($\Leftarrow$) 
$\s{x^*}$ is regular only if Manacher's condition holds for every centre $c$, radius $r$ and range $k$. In such a case $construct$ either copies existing letters of scope one or fills an empty position with a new character to produce $\s{x^*}$. Therefore $S(\s{x^*})=1$.
\qed
\end{proof}

\begin{thrm}\label{thrm-const}
Let $\s{x^*}$ be the string produced by the procedure $construct$ on an input $\MP$. Then
$\s{x^*}$ is regular $\Leftrightarrow$ $\MP$ is regular.
\end{thrm}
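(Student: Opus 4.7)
The plan is to establish the two directions separately, with the easy direction following essentially by definition and the nontrivial direction reducing to Manacher's condition together with Lemma~\ref{lemm-regconstx}.

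For the forward direction $(\Rightarrow)$, I would argue by definition: if $\s{x^*}$ is the output of $construct$ and is regular, then $\s{x^*}$ is itself a regular string whose maximal palindrome array equals $\MP$. Before invoking this I would first establish (as a short lemma or remark accompanying the theorem) that every string produced by $construct$ has $\MP_{\s{x^*}} = \MP$; this follows because, at each centre $c$, the routine copies $\s{x^*}[c\m k]$ to $\s{x^*}[c\+ k]$ for $1\le k\le \MP[c]$, thus guaranteeing matches inside the stipulated palindrome, while the call to $update\FS$ records the forbidden pair $(c\m r\m 1, c\+ r\+ 1)$ so that the lex-least choice at the position $c\+ r\+ 1$ avoids $\FS[c\+ r\+ 1]$, preventing any unintended extension of the palindrome at $c$. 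Granted this, the existence of a regular string corresponding to $\MP$ is exactly the definition of $\MP$ being regular.

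For the reverse direction $(\Leftarrow)$, suppose $\MP$ is regular. Then by definition there exists a regular witness string $\s{z^*}$ with $\MP_{\s{z^*}}=\MP$. By Lemma~\ref{lemmreg} we may take $\s{z^*}$ to be of scope $1$, and then by Manacher's result (cited above Definition~\ref{defM}), Manacher's condition (Definition~\ref{defM}) holds at every centre $c$ of $\MP$. Consequently, each call to $MC(c,r,k)$ inside $construct$ returns \textbf{true}, so the flag $regular$ is never set to \textbf{false}. Under this flag, every assignment to $\s{x^*}$ is either a copy $\s{x^*}[c\+ k]\la\s{x^*}[c\m k]$ of a previously set scope-$1$ letter, or a fresh lex-least character chosen from outside the forbidden set $\FS[c]$ (and hence a scope-$1$ letter). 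Therefore $S(\s{x^*})=1$, and by the $(\Rightarrow)$ part of Lemma~\ref{lemm-regconstx}, $\s{x^*}$ is regular.

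The hard part, and the place where I would spend the most care, is the invariant needed for the auxiliary claim in the forward direction: that at every iteration the partial string built by $construct$, together with the recorded forbidden pairs, still admits completion to a string whose maximal palindrome array is exactly $\MP$. This requires verifying (i) that the Manacher-style skipping (advancing to $nextc$ when $r_\ell=r\m k$ and $r_r>r_\ell$) does not miss any centre whose $MC$ check would have failed, and (ii) that the $\FS$ update performed only at the left endpoint of the forbidden pair suffices, because $construct$ scans left-to-right and chooses lex-least letters, so by the time the right endpoint $c\+ r\+ 1$ is reached, the forbidden characters with respect to every centre whose right endpoint lies at that position have already been registered in $\FS[c\+ r\+ 1]$. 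Once these invariants are in hand, both directions close cleanly via Lemma~\ref{lemm-regconstx}.
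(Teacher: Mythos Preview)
Your reverse direction $(\Leftarrow)$ is essentially identical to the paper's: both argue that if $\MP$ is regular then Manacher's condition holds everywhere (citing \cite{manacher:75}), so the flag $regular$ never flips, every assignment in $construct$ is a scope-$1$ copy or a fresh character, hence $S(\s{x^*})=1$, and Lemma~\ref{lemm-regconstx} finishes.

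Your forward direction $(\Rightarrow)$ takes a genuinely different route. The paper argues contrapositively through the algorithm's control flow: if $\s{x^*}$ is regular then (by Lemma~\ref{lemm-regconstx}) $S(\s{x^*})=1$, which forces the flag $regular$ to have stayed $\true$ throughout, hence every $MC$ test passed, and then \cite{manacher:75} gives that $\MP$ is regular. You instead go straight to the definition: first establish the correctness invariant $\MP_{\s{x^*}}=\MP$ for the output of $construct$, and then a regular $\s{x^*}$ is itself a witness that $\MP$ is regular. Your route is cleaner conceptually and proves something the paper actually needs but never states---that $construct$ really does output a string with the prescribed palindrome array---whereas the paper's route is shorter but leans entirely on the Manacher characterisation and on the implicit claim that $regular=\true$ throughout is equivalent to $S(\s{x^*})=1$. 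One caution on your sketch of the invariant: the forbidden-set mechanism is consulted only when a position is filled as a fresh lex-least character, not when it is filled by a copy $\s{x^*}[c\+ k]\la \s{x^*}[c\m k]$; so to show that no palindrome overshoots its stipulated radius you will still need Manacher's condition to argue that copied values cannot accidentally match across a forbidden pair. You flag this under point~(ii), but it is really the crux of the invariant, not merely a bookkeeping check on $\FS$.
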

\begin{proof}
($\Rightarrow$) $\s{x^*}$ is regular only if Manacher's condition holds for every centre $c \in \{1, \ldots, m\}$, and for every corresponding radius $r$ and range $k$. Since Manacher's condition holds, it follows from the result in~\cite{manacher:75} that $\MP$ is regular. 

($\Leftarrow$) From~\cite{manacher:75}, MP is regular only if Manacher's condition holds. Since the condition holds, during the execution of $construct$ an empty position is filled either by copying previously occurring letters of scope one or by a new character. Thus the resulting string is of scope $1$, so that from Lemma~\ref{lemm-regconstx}, $\s{x^*}$ is regular.
\qed
\end{proof}

Using the two examples given below in (\ref{ex3}) and (\ref{ex4}), we illustrate the execution of procedure $construct$. 

In (\ref{ex3}) the positions corresponding to all odd centres are filled with $\#$ and positions corresponding to all even centres are filled with $0$. Furthermore, $\FS[1..m] \la \emptyset$, $empty[1..2] \la false$ and $empty[3..m] \la true$. Then
for centre $c = 2$, $\s{x^*}[2] \la 1$, and for centre $c = 3$, $FS[4] \la \{2\}$ and $empty[3] \la false$. At centre $c=4$, $empty[4] \la false$, and since $r=3$ we check if Manacher's condition holds for $k=3,2,1$. As we do this, we also check whether $empty[c+k]$ is $true$; if so, we set it to $false$ and compute the $\FS$ set w.r.t to these centres, thereby setting $\FS[6]$ and $\FS[8]$ to $\{4\}, \{4,6\}$ respectively. Since Mancher's condition holds, the setting at position $\s{x^*}[2]$ is copied into $\s{x^*}[6]$; next, since position $\s{x^*}[4]=0$, it is filled with the next lex-least character which is not equal to a character at any index position in the forbidden set $\FS[4]=\{2\}$ --- that is, $2$.

Similarly for centre $c = 8$, $empty[8..15] \la false$, $\FS[10] \la \{8\}, \FS[12] \la \{8,10\}$. Since Manacher's condition holds for all $k=7, 6, \ldots, 1$, $\s{x^*}[2]$, $\s{x^*}[4]$ and $\s{x^*}[6]$ are copied into $\s{x^*}[14],\s{x^*}[12]$ and $\s{x^*}[10]$, respectively. Finally, since $\s{x^*}[8]=0$, the position is filled with the next lex-least character $3$ which is not equal to a character at any index position in the forbidden set $\FS[8]=\{4,6\}$. The resulting string $\s{x^*}$ is regular and lex-least.
\begin{small}
\begin{equation}
\label{ex3}
\begin{array}{rcccccccccccccccc}
\scriptstyle 1 & \scriptstyle 2 & \scriptstyle 3 & \scriptstyle 4 & \scriptstyle 5 & \scriptstyle 6 & \scriptstyle 7 & \scriptstyle 8 & \scriptstyle 9 & \scriptstyle 10 & \scriptstyle 11 & \scriptstyle 12 & \scriptstyle 13 & \scriptstyle 14 & \scriptstyle 15 \\
MP = 0 & 1 & 0 & 3 & 0 & 1 & 0 & 7 & 0 & 1 & 0 & 3 & 0 & 1 & 0 \\
\s{x^*} = \# & 1 & \# & 2 & \# & 1 & \# & 3 & \# & 1 & \# & 2 & \# & 1 & \# \\
FS = \emptyset & \emptyset & \emptyset & \{2\} & \emptyset & \{4\} & \emptyset & \{4, 6\} & \emptyset & \{8\} & \emptyset & \{8,10\} & \emptyset & \{12\} & \emptyset \\
\end{array}
\end{equation}
\end{small}

Observe that the $\MP$ array in (\ref{ex4}) only differs from the $\MP$ array in (\ref{ex3}) at position $12$. However, the $\MP$ array in (\ref{ex4}) does not correspond to any regular string, since, as we shall see, it fails Manacher's condition for $c=8,\  r=7$, and $k=4$. In (\ref{ex4}) $construct$ produces the same string as in (\ref{ex3}) up until $c=7$. Then at centre $c = 8$, $\s{x^*}[2]$ is first copied into $\s{x^*}[14]$. However, since for $k=4$ the $\MP$ array fails Manacher's condition, a new lex-least character $3$ is added to the letters at positions $4$ and $12$. Then at positions $6$ and $10$, a new character $4$ is added to both $\s{x^*}[6]$ and $\s{x^*}[10]$. Finally, since $\s{x^*}[8]=0$, it is filled with another new character $5$. 
{\small
\begin{equation}
\label{ex4}
\begin{array}{rcccccccccccccccc}
\scriptstyle 1 & \scriptstyle 2 & \scriptstyle 3 & \scriptstyle 4 & \scriptstyle 5 & \scriptstyle 6 & \scriptstyle 7 & \scriptstyle 8 & \scriptstyle 9 & \scriptstyle 10 & \scriptstyle 11 & \scriptstyle 12 & \scriptstyle 13 & \scriptstyle 14 & \scriptstyle 15 \\
MP = 0 & 1 & 0 & 3 & 0 & 1 & 0 & 7 & 0 & 1 & 0 & 1 & 0 & 1 & 0 \\
\s{x^*} = \# & 1 & \# & \{2,3\} & \# & \{1,4\} & \# & 5 & \# & 4 & \# & 3 & \# & 1 & \#\\
\end{array}
\end{equation}
}

We now discuss the running time complexity of $construct$. This  is dependent on the input $\MP$ array. If $\MP$ is regular, then $construct$ yields a lex-least regular string in $\mathcal{O}(n)$ time; otherwise it yields an indeterminate string in $\mathcal{O}(n^2)$ time. We remark that to avoid having $\sigma$ in the running time, we use the strategy discussed in~\cite{IIBT10} to fill the lex-least character not in the list of forbidden characters at centre $c$. Here we briefly outline the strategy: The paper \cite{IIBT10} uses a bit vector $F$ of length $n$ where the element $F[h]$ corresponds to the $h$-th lex-least least character. Initially all elements of $F$ are set to $0$. If a character $\s{x}[j]$ is forbidden w.r.t the character $\s{x}[i]$, and if $\s{x}[j]$ is the $h$-th lex-least character, then $F[h]$ is set to $1$. After finding all forbidden characters for $\s{x}[i]$ and setting the $F$ vector appropriately, the lex-least character for $\s{x}[i]$ is set equal to the $k$-th lex-least character, where $k$ is the first (or smallest) index such that $F[k]=0$.

If $\MP$ is regular, then similar to Manacher's algorithm, not every centre is evaluated to yield a regular string. Suppose $c$, $1 \leq c \leq m$, is the least centre for which the Manacher's test is performed and is satisfied. Then, two cases arise.  First, there exists a $k$ (in $construct$ we consider the minimum $k$), $1\leq k \leq r$, where the following subcondition of Manacher's condition (Definition~\ref{defM}) is satisfied: $r_{\ell} = r\m k$ and $r_r > r_{\ell}$.  Second, when the subcondition is not satisfied, all the palindromes that arise at centres $c'\leq c\+ r$ are contained in the palindrome at $c$ and also satisfy Manacher's condition. Therefore, to avoid duplication the next centre to be evaluated is $c\+r\+1$.

In the former case, however, the palindrome at centre $c+k$ extends beyond the right edge of the palindrome centred at $c$. Therefore, $c\+ k$ is the next centre to be evaluated. However, since $c$ satisfies Manacher's, a shorter palindrome of length $r\m k$ at centre $c\+k$ also occurs at centre $c\m k$. Therefore, we only need to check the positions occurring after the first $r\m k$ positions on both sides of $c\+k$. 
In both these cases a position in the $\MP$ array is visited at most twice.  Therefore, the running time of $construct$ to yield a regular string is $\mathcal{O}(n)$.

If $\MP$ is not regular, after the first centre that fails Manacher's test, every centre is evaluated, and as a result $construct$ yields an indeterminate string in $\mathcal{O}(n^2)$ time. However, note that the running time for an $\MP$ array that is not regular largely depends on the centre that fails the Manacher's test. If the centre lies close to the end of the array, then the indeterminate string is constructed in $\mathcal{O}(n)$ time; otherwise it takes $\mathcal{O}(n^2)$ time. Thus we have:

\begin{thrm}\label{thrm-rtconstruct}
Given a regular MP array of length $m = 2n\+ 1$,
procedure $construct$ yields a lex-least regular string in $\mathcal{O}(n)$ time.
\end{thrm}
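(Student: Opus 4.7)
The plan is to leverage the fact that $construct$ is a careful adaptation of Manacher's linear-time algorithm~\cite{manacher:75}, and to argue the $\O(n)$ bound by essentially the same amortized analysis. Since $\MP$ is regular, Theorem~\ref{thrm-const} already ensures that the output $\s{x^*}$ is the lex-least regular string corresponding to $\MP$, so only the running time remains to be established.

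First, since $\MP$ is regular, Manacher's condition holds at every evaluated centre $c$ and every range $k \in \{1,\ldots,\MP[c]\}$, so the branch that sets $regular \la \false$ is never executed. Consequently, each iteration of the inner loop performs only $\O(1)$ work: two array reads for $r_{\ell}$ and $r_r$, a Manacher's-condition test, an optional copy $\s{x^*}[c\m k] \la \s{x^*}[c\+ k]$, and at most one update of $\FS$ and $empty$. Moreover, the lex-least-letter selection at each centre is $\O(1)$ amortized via the bit-vector strategy of~\cite{IIBT10} outlined in the excerpt, since each entry is added to some $\FS[c']$ at most once across the whole run.

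The core of the argument is a bound on the total number of inner-loop iterations. Let $c_1 < c_2 < \cdots < c_p$ be the sequence of evaluated centres, with radii $r_i = \MP[c_i]$ and right edges $R_i = c_i \+ r_i$. I would first show that $R_1 < R_2 < \cdots < R_p$ strictly: in Case 1 (a smallest $k_i \leq r_i$ satisfies $r_{\ell} = r_i\m k_i$ and $r_r > r_{\ell}$), we have $c_{i+1} = c_i\+ k_i$ and $r_{i+1} = \MP[c_i\+ k_i] > r_i\m k_i$, so $R_{i+1} > R_i$; in Case 2 (no such $k_i$), $c_{i+1} = R_i\+ 1$, giving $R_{i+1} \geq c_{i+1} > R_i$. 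Then, using that when the algorithm arrives at $c_{i+1} = c_i\+ k_i$ the palindrome at $c_{i+1}$ already agrees with the previously propagated characters over its first $r_i\m k_i$ positions on each side, I would argue that the inner loop at $c_{i+1}$ performs nontrivial work only on positions strictly to the right of $R_i$. Telescoping these contributions gives a total of $R_p\m R_0 = \O(m) = \O(n)$ iterations across all evaluated centres.

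The trickiest part will be formalising this last step: specifically, justifying that the Manacher-style jump $c \la nextc$ genuinely avoids redundant work at positions already assigned during the processing of earlier centres, so that the inner-loop cost really is charged only to the advance of the right frontier $R_i$. This is precisely the place where the regularity of $\MP$ --- via Manacher's condition, by which the mirror palindromes at $c_i\m k_i$ and $c_i\+ k_i$ agree up to the boundary of the palindrome at $c_i$ --- makes the amortization go through.
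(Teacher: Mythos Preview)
Your proposal is correct and follows essentially the same route as the paper: the paper's argument (given in the discussion immediately preceding the theorem) is exactly the Manacher-style amortized analysis you outline, splitting on whether the subcondition $r_\ell = r\m k$, $r_r > r_\ell$ fires (jump to $c\+ k$) or not (jump to $c\+ r\+ 1$), and concluding that each position of $\MP$ is visited at most twice. Your formulation via the strictly increasing right frontier $R_i$ is a slightly more explicit telescoping of the same idea, and your appeal to Theorem~\ref{thrm-const} for correctness and to the bit-vector technique of~\cite{IIBT10} for the lex-least fill matches the paper's treatment.
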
  

\begin{thrm}\label{thrm-rtconstruct}
Given a MP array of length $m = 2n\+ 1$ that is not regular,
procedure $construct$ yields an indeterminate string in $\mathcal{O}(n^2)$ time.
\end{thrm}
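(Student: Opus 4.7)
The plan is to split the execution of $construct$ into two phases, separated by the first centre $c_0$ at which Manacher's condition fails (and $regular$ is therefore set to $false$). Since by hypothesis $\MP$ is not regular, Theorem~\ref{thrm-const} guarantees that such a $c_0$ exists, and moreover $c_0\le m\m 1$.

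For the prefix of the execution up through $c_0$, I would reuse the Manacher amortization used in the preceding theorem (the regular case): while $regular$ is still $true$, $construct$ faithfully mimics Manacher's algorithm, using the $nextc$ shortcut to skip centres whose palindromes are already certified by a previously processed centre, invoking $update\FS$ at most once per centre (gated by $empty[c]$), copying $\s{x^*}[c\m k]$ into $\s{x^*}[c\+ k]$ in $O(1)$ time, and resolving empty centre positions with the amortised $O(1)$ bit-vector trick of~\cite{IIBT10}. The standard argument --- each position of $\MP$ examined at most twice --- then gives an $O(n)$ bound for this phase, exactly as in the proof for regular $\MP$, with the final failing comparison at $c_0$ absorbing at most $O(\MP[c_0])=O(n)$ additional work.

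For the remainder of the execution I would argue directly from the pseudocode. Once $regular = false$, the outer loop advances $c$ by one on every iteration, and the Boolean guards in Figure~\ref{fig-construct} cause both $update\FS$ and the lex-least character search to be bypassed. The only surviving work at centre $c$ is the inner for-loop of $\MP[c]$ iterations, each performing $O(1)$ work --- adjoining a single new character to the letters at positions $c\m k$ and $c\+ k$ --- plus one final $O(1)$ insertion at $\s{x^*}[c]$. Summing over the remaining centres and using the trivial bound $\MP[c] \le n$, the worst-case cost of this phase is $\sum_{c\ge c_0} O(\MP[c]) = O(n^2)$; together with the $O(n)$ prefix, this yields the claimed total.

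The one subtlety I expect to be the main obstacle is justifying that the per-iteration cost in the second phase really is $O(1)$: appending a new character to an \emph{existing} indeterminate letter must be constant time under the representation of Section~\ref{sect-indet}. I would address this by noting that the indeterminate-letter table $I$ and its backing array $L$ can both be extended in constant amortised time (a new entry appended to $L$, the scope of the relevant letter incremented, with a fresh indeterminate letter index allocated if the position was previously regular), so no hidden logarithmic or linear factor creeps in and the overall bound is $O(n)+O(n^2)=O(n^2)$.
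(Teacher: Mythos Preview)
Your proposal is correct and follows essentially the same two-phase decomposition the paper uses: Manacher-style amortisation up to the first failing centre, then the trivial $O(n)$-centres-times-$O(n)$-radius bound once $regular=\false$ and $c$ advances by one. In fact you are more careful than the paper's brief discussion, explicitly justifying the constant-time letter-append via the representation of Section~\ref{sect-indet}; the one small omission (shared with the paper) is the $\s{x^*}[c\m k]\not\approx\s{x^*}[c\+ k]$ test in the \textbf{else} branch, which also needs to be accounted for in the per-iteration cost.
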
  

We remark that our algorithm $construct$ is similar to the approach presented in~\cite[Algorithm~2]{IIBT10} to construct a regular string if $\MP$ is regular. The key differences between $construct$ and their approach is that their algorithm sometimes verifies an $\MP$ array which is not regular to be regular, and therefore outputs an incorrect lex-least regular string $\s{w}$. To correct this, they execute Manacher's algorithm~\cite[Algorithm~1]{IIBT10} on $\s{w}$ to compute its palindrome array $\MP_{\s{x}}$. Then they output $\s{w}$ if and only if $\MP_{\s{x}} = \MP$. We on the other hand use a direct approach --- our algorithm outputs a regular string if and only if the input $\MP$ array is regular. Furthermore, if $\MP$ is not regular, their algorithm simply exits, while $construct$ on the other hand yields an indeterminate string.

Consider the example shown in Figure~\ref{ex5}. For the given $\MP$ array, $construct$ produces the string $\s{x^*}$. Observe that not all the centres $4-11$ satisfy Manacher's condition. Furthermore, if a centre $c$ and a range $k$ fail Manacher's condition, it is not necessarily the case that the centre $c+k$ also fails it. For example, for centre $c=8$, and range $k=3$, Manacher's condition fails; however, the centre $c+k=11$ satisfies it. 

\begin{figure}
{\footnotesize
\begin{equation*}
\arraycolsep=2.2pt
\begin{array}{rcccccccccccccccc}
\scriptstyle 1 & \scriptstyle 2 & \scriptstyle 3 & \scriptstyle 4 & \scriptstyle 5 & \scriptstyle 6 & \scriptstyle 7 & \scriptstyle 8 & \scriptstyle 9 & \scriptstyle 10 & \scriptstyle 11 & \scriptstyle 12 & \scriptstyle 13 \\
MP = 0 & 1 & 0 & 3 & 2 & 3 & 6 & 3 & 4 & 1 & 2 & 1 & 0 \\
\s{x^*} = \# & \{1,5\} & \# & \{2,3, 4,6\} & \# & \{1,3, 7, 8 ,9\} & \# & \{4,7,10\} & \# & \{6,8,10,11\} & \# & \{5,9,11\} & \# \\
\FS = \emptyset & \emptyset & \emptyset & \{2\} & \emptyset & \emptyset & \emptyset & \{2\} & \emptyset & \{2\} & \emptyset & \{4,8\} & \emptyset
\end{array}
\end{equation*}
}
\caption{Given an $\MP$ array that is not regular, procedure $construct$ produces the indeterminate string $\s{x^*}$.}\label{ex5}
\end{figure}

As can be observed in the example shown in Figure~\ref{ex5}, the number of centres that fail Manacher's condition is of order $n$, where $n$ is the length of the string. Therefore, we are forced to evaluate $\mathcal{O}(n)$ centres in this example. Hence we state the following:

\begin{conj}\label{lem:con}
To construct an indeterminate string over an MP array of length $m = 2n\+ 1$ that is not regular
requires at least $\Omega(n^2)$ time.

\end{conj}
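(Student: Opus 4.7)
The plan is to establish the $\Omega(n^2)$ lower bound via an output-size argument: I would exhibit a family of non-regular $\MP$ arrays whose corresponding indeterminate strings must encode $\Omega(n^2)$ characters, so that any algorithm producing such an \s{x^*} must spend $\Omega(n^2)$ time simply writing the output.

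First, I would construct an explicit infinite family of feasible, non-regular arrays $\{\MP_n\}$ of length $m = 2n\+ 1$ that generalizes the pattern illustrated in Figure~\ref{ex5}. The target is an array in which a constant fraction of the $(c,k)$ pairs violate Manacher's condition, while simultaneously many palindrome-induced equalities are enforced at the same positions. Concretely, I would chain together $\Theta(n)$ overlapping long palindromes whose prescribed radii systematically contradict one another at $\Theta(n^2)$ centre/range combinations. In this setting, tracing procedure $construct$ shows that each failure introduces a fresh character to a pair of positions, and the long overlapping palindromes funnel these fresh characters into the same positions, causing $\Omega(n)$ positions to accumulate $\Omega(n)$ distinct characters each, for a total scope $\sum_{i=1}^{m} s(\s{x^*}[i]) = \Omega(n^2)$.

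Second, under any reasonable representation of indeterminate strings, including the byte-level encoding of Section~\ref{sect-indet}, the space required to describe \s{x^*} is at least linear in its total scope, since the list $L$ must enumerate the characters of every indeterminate letter and distinct indeterminate letters cannot share list entries. Hence if the total scope is $\Omega(n^2)$, any algorithm that produces a valid \s{x^*} consistent with $\MP_n$ must perform $\Omega(n^2)$ write operations, giving the claimed bound.

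The main obstacle is the combinatorial step: proving that no alternative valid indeterminate string can be substantially smaller. An algorithm is not required to emit the lex-least witness and could in principle reuse characters across positions to compress the output. To rule this out I would build a constraint graph on the positions of \s{x^*} whose edges carry must-match labels (from each palindrome equality implied by $\MP_n$) and must-differ labels (from each pair in $\FP$ witnessed by a failing Manacher test). Contracting the must-match classes and then counting incident must-differ edges at a single class yields a per-position lower bound on its scope; the delicate part, and the technical heart of any proof, will be designing $\MP_n$ so that this accounting simultaneously forces $\Omega(n)$ classes to each require $\Omega(n)$ distinct characters, rather than the constraints being satisfiable by a few highly-shared symbols. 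A secondary, softer route I would keep in reserve is an adversary argument: run any candidate algorithm, and whenever it examines or writes $o(n^2)$ cells, complete the partially-read $\MP$ input so that a valid output must differ from the algorithm's answer in an unexamined position.
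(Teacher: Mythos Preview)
The statement you are attempting to prove is labelled a \emph{Conjecture} in the paper, and the paper gives no proof of it.  The only supporting evidence offered is the example in Figure~\ref{ex5}, which shows that the specific procedure $construct$ visits $\Theta(n)$ centres on that input; this is an upper bound on one algorithm, not a lower bound on all algorithms.  So there is no ``paper's own proof'' to compare against --- you are trying to settle an open problem.

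Your proposal is a reasonable research outline, but it is not a proof, and the gap is exactly the one you flag yourself.  The output-size argument stands or falls on the claim that some non-regular $\MP$ forces every valid indeterminate witness to have total scope $\Omega(n^2)$.  You do not establish this; you only describe a constraint graph you would build and say the ``delicate part'' is designing $\MP_n$ so that the accounting works.  That is the whole problem.

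Worse, there is a concrete reason to worry that the output-size route cannot succeed.  The constraints imposed by a feasible $\MP$ are of two kinds: must-match pairs $(c-k,c+k)$ for $k\le \MP[c]$, and at most $m=O(n)$ forbidden pairs $(c-\MP[c]-1,\,c+\MP[c]+1)$.  A valid indeterminate string assigns to each position a set so that must-match pairs intersect and forbidden pairs are disjoint.  The construction of Lemma~\ref{lemm-exist} spends one fresh character per must-match pair, which can indeed be $\Theta(n^2)$ characters, but nothing prevents extensive reuse: a single character may be given to any set of positions that is independent in the (sparse, $O(n)$-edge) forbidden-pair graph.  Covering the must-match edges by such independent sets with total size $o(n^2)$ is not obviously impossible, and if an $O(n\,\mathrm{polylog}\,n)$ upper bound exists for every feasible $\MP$, your entire line of attack collapses.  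Before investing in the lower-bound combinatorics, you should either exhibit a family where a per-position scope lower bound of $\Omega(n)$ can actually be \emph{proved} (not just suggested by what $construct$ happens to output), or rule out a sub-quadratic universal upper bound on total scope.

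Your fallback adversary argument is also only a sketch: it presumes the algorithm must ``examine'' $\Omega(n^2)$ cells of $\MP$, but $\MP$ has only $m=O(n)$ entries, so a cell-probe argument on the input cannot give more than a linear bound.  Any adversary would have to operate on the output side, which brings you back to the unresolved output-size question.
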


\section{Conclusion and Open Problems}
\label{sect-probs}

In this paper we have characterized regular strings as those whose triples are all transitive; the strings for which this is not true are called indeterminate.  A new approach based on the transitivity of matching among the regular letters of a string enables us to precisely define strings and eliminate false positive cases seen in the traditional definition of indeterminate strings. Based on this new definition, we present a linear time algorithm REGULAR that checks the regularity of the given string $\s{x}$ by checking the regularity of its corresponding reduced string $\s{x_R}$, and as a byproduct yields a lex-least regular string that is isomorphic to $\s{x}$ whenever $\s{x}$ is regular.

To check the regularity of $\s{x_R}$, we present two algorithms $regular_{min}$ and $regular_{min\_matrix}$, which  illustrate an interesting connection between the matching relation defined on the letters of $\s{x}$ and the computation of the relation's transitive closure. Many algorithms exist to compute the transitive closure of a relation in $\O(n^3)$ time. However, it is a long standing open problem whether this complexity can be reduced. We show that for important special cases, the transitive closure can be computed in $\mathcal{O}(n^2)$ time using the $regular_{min\_matrix}$ function.

Furthermore, we consider the reverse engineering problem of constructing a string (regular or indeterminate) from the given palindrome array. We present the algorithm $construct$ to output a lex-least regular string, if it exists, in $\mathcal{O}(n)$ time; otherwise the algorithm returns an indeterminate string in $\mathcal{O}(n^2)$ time. An immediate question arises whether an algorithm faster than $construct$ exists to output an indeterminate string. It would also be interesting to develop an algorithm that yields an indeterminate string over a minimum alphabet, corresponding to an $\MP$ array that is not regular. Furthermore, although the running time complexity of REGULAR is $\mathcal{O}(n)$, the constant hidden in it could in extreme cases be rather large -- is it possible to reduce the size of this constant?

More generally, a new definition of regularity in strings may affect
the relationship between strings and the data structures mentioned
in the Introduction that facilitate their use:
border array, cover array, prefix array, and suffix array. 
A string containing indeterminate letters may now actually be regular,
provided all its triples are transitive:
can these important arrays be computed accordingly in linear time?

Conversely, what effect does this new definition have on the
``reverse engineering'' problem related to these arrays?
We have presented in Section~\ref{sect-maxpal} an algorithm that
reverse engineers the palindrome array to a regular string,
whenever possible.
There exist algorithms to reverse engineer the other array noted
above --- can they be modified/extended to yield equivalent results?
Can the time requirements of these algorithms be reduced to near-linear,
except in pathological cases?

\section*{Acknowledgments}
We thank the reviewers for carefully reading the manuscript and providing insightful comments which led to substantial improvements in the manuscript.

\paragraph{Funding:}
Louza was supported by Grant $\#$2017/09105-0 from the S\~ao Paulo Research Foundation (FAPESP).
Mhaskar and Smyth were supported by Grant 36797 from the Natural Sciences
\& Engineering Research Council of Canada (NSERC).


\end{document}